\newcommand\wordcount{
    \immediate\write18{texcount -sum -1 \jobname.tex > 'count.txt'}
\input{count.txt}words}
\newtheorem{prop}{Proposition}
\def\Var{\mathbb{V}\,}
\def\E{\mathbb{E}\,}
\def\Supp{{\rm Supp}\,}
\newcommand\independent{\protect\mathpalette{\protect\independenT}{\perp}}
\def\independenT#1#2{\mathrel{\rlap{$#1#2$}\mkern2mu{#1#2}}}
\begin{document}

\sloppy
\author{Michael Gechter\thanks{Michael Gechter is Assistant Professor, Department of Economics, The Pennsylvania State University, Kern Building, University Park PA 16802 (Email: mdg5396@psu.edu).}  \\ Cyrus Samii \thanks{Cyrus Samii is Associate Professor, Department of Politics, New York University, 19 West 4th St., New York, NY 10012 (Email: cds2083@nyu.edu).}\\  Rajeev Dehejia\thanks{Rajeev Dehejia is Professor, Robert F. Wagner Graduate School of Public Service, New York University, 295 Lafayette St., New York, NY 10012 (Email: dehejia@nyu.edu)} \\ Cristian Pop-Eleches\thanks{Cristian Pop-Eleches is Associate Professor, School of International and Public Affairs, Columbia University, 420 West 118th St., New York, NY 10027 (Email: cp2124@columbia.edu).}}

\title{Evaluating {\it Ex Ante} Counterfactual Predictions Using {\it Ex Post} Causal Inference\footnote{We are grateful to David Atkin, Abhijit Banerjee, Sylvain Chassang, Andrei Gomberg, Kei Hirano, Hiroaki Kaido, Rohit Lamba, Rachael Meager, Minchul Shin, Jann Spiess, and seminar/workshop participants at Bristol, the Canadian Econometric Study Group, Cambridge, EPED, J-PAL, Oxford, Penn State, Rice, the Tsinghua Econometrics Conference, UCL, and the Y-RISE Inaugural Conference for helpful comments. We are indebted to Ben Olken and Gabriel Kreindler for sharing code and data from \cite{Banerjee2015b}. Shayleen Reynolds, Sanjna Shenoy, Chung Han Yang, Kaida Zhang, and Junlong Aaron Zhou provided excellent research assistance.}}
\date{\today
	\\v0.3}

\maketitle

\begin{abstract} 
We derive a formal, decision-based method for comparing the performance of counterfactual treatment regime predictions using 
the results from randomized experiments. 
Our approach allows us to quantify and assess the statistical significance of differential performance for optimal treatment regimes estimated from structural models, extrapolated treatment effects, expert opinion, and other methods.
We apply our method to evaluate optimal treatment regimes for conditional cash transfer programs across countries where predictions are generated using data from experimental evaluations in other countries and pre-program data in the country of interest.
\end{abstract}


%
\clearpage
\doublespacing
\section{Introduction}

Social scientists often provide recommendations about the implementation of policies, which determine whether and in what manner a given treatment should be applied in some target context. 
Crucial to this task is generating counterfactual predictions to inform such recommendations.
The methods to which social scientists turn for this purpose are varied.
They include quantitative extrapolations from existing randomized evaluations or observational studies, predictions based on structural models that interpret behavioral patterns in the target context, and more subjective expert opinions, among others.

Our goal in this paper is to offer a formal framework for evaluating the relative success of methods for generating policy recommendations.
We evaluate the success of {\it ex ante} policy recommendations that can draw on pre-existing experiments and descriptive data to recommend a treatment assignment in a target context. 
Then, we conduct an {\it ex post} evaluation of the recommendation, based on information from  experiments in the target context. 
We are thus able to quantify and assess the statistical significance, {\it ex post}, of the differential performance of various methods for generating {\it ex ante} recommendations.
These {\it ex post} assessments can inform choices over methods to produce {\it ex ante} recommendations for new contexts where experimental data are not yet available.

Our framework is decision-based, considering optimal choices for a social planner seeking to maximize a  welfare objective subject.
We consider a linear social welfare function, although the framework could take into consideration preferences over inequality, uncertainty, status quo bias, or other modifications.
The planner is thought to face constraints that limit the set of policy options.
Multiple methods may yield identical results for basic types of recommendations --- e.g., they may agree on whether the treatment is harmful or helpful on average, in which case they would make the same recommendation regarding the alternatives of ``treat everyone'' versus ``treat no one.''

But when it comes to estimating optimal policies, methods may differ in their recommendations.
Such differences in recommendations would be due to different methods' predictions about how different types of people respond to treatment  \citep{Manski2004, Dehejia2005, Hirano2009, Imai2011, Kitagawa2015, Athey2017a} .
The wider the range of admissible policy options, the wider is the scope for methods to differ in their recommendations and thus in their relative performance.
Another way to put this is that the more refined the policy decision at stake, the more we need to ask of the methods, and thus the more refined will be our judgment about the performance of different methods.

What we refer to as a ``method'' is an approach for determining which of these policy options should be implemented.
These include reduced form methods that rely on conditional unconfoundedness to extrapolate conditional treatment effects using existing experimental or observational evaluations from other contexts, as in  \cite{JosephHotz2005} and \cite{Dehejia2014}.
Another class of methods includes meta-analytical methods that model heterogeneity across contexts, as in \cite{Dehejia2003},  \cite{Meager2015}, and \cite{Vivalt2014}.
A third class of methods includes structural models.
These include models that interpret behavioral patterns among untreated observations within the target context so as to generate counterfactual predictions about outcomes under treatment, as in \cite{Todd2006} and \cite{Todd2010a}.
They also include approaches that estimate of some structural model parameters using untreated units in the target context and others from available experimental evidence, as in \cite{Attanasio2012}.
Hybrids of reduced form and structural methods are also available, as in  \cite{Gechter2015a}.\footnote{Our framework as currently developed does not allow for interval predictions for treatment effects, like those generated by \cite{Gechter2015a} and \cite{Andrews2017} but integrating them through the theory of treatment choice under ambiguity reviewed in \cite{Manski2011} presents no particular problem.}
Finally, a fifth class of methods includes the solicitation of subjective opinions of experts, as discussed in \cite{Banerjee2016b} and applied in \cite{DellaVigna2016a}.

In our application, we allow methods to draw on existing experimental data from other contexts as well as data on covariates and untreated outcomes in the target context.
The resulting recommendations are thus ``{\it ex ante}'' insofar as they are made under uncertainty about the distribution of potential outcomes under treatment.
An assessment is then done {\it ex post}, using data from experiments that reveal the distribution of potential outcomes under treatment.
Our analysis allows us to determine, {\it ex post}, whether the policy recommendation from one or another method performed significantly better in terms of promoting welfare.   

We use our framework to evaluate conditional cash transfer policies for increasing children's school enrollment, a policy subject to widespread consideration around the world \citep{Parker2018}.
We begin with an illustrative example of making a policy recommendation for a conditional cash transfer program in Morocco.  {\it Ex ante}, the methods can draw upon data from a conditional cash transfer experiment in Mexico---specifically, the PROGRESA randomized evaluation---as well as covariates and untreated outcome data in Morocco.
We consider four methods:
(1) simple reduced form extrapolation of conditional treatment effects by age-sex strata from Mexico to Morocco, (2) more sophisticated conditional average treatment effect extrapolation using \cite{Athey2019-grf}'s Generalized Random Forest (GRF) algorithm, (3) a static, semi-parametric structural (SPS) model based on \cite{Todd2010a}, (4) a dynamic, parametric structural model based on \cite{Attanasio2012} (AMS).
We then assess, {\it ex post}, the performance of the two methods using the results of the randomized evaluation conducted by  \cite{Benhassine2015}. 

The results show
that methods (1) and (4) perform best in this particular case despite making fairly different policy recommendations.   
The GRF appears to suffer from ``contextual overfitting'' in the sense that the extra heterogeneity it detects is more idiosyncratic to PROGRESA than the basic heterogeneity captured by a simple stratification exercise.
SPS's non-parametric component is perhaps too flexible, leading to a more standard overfitting problem despite some regularization.  
These results are preliminary, based on one reference-target pair out of 7 different CCT contexts we have harmonized data from.
Our full set of findings are intended to speak to the question of whether policy recommendations should rely on internally valid evidence generated outside the target context, or whether one should rely on potentially confounded evidence from within the target context, as in \cite{Pritchett2013}.

Our methodological contribution is a formal framework for evaluating policy recommendations based on counterfactual predictions from competing methods.
Our aim is to provide tools that are more general in speaking to policy recommendations than the relatively informal and case-specific model validation exercises that regularly appear in the applied microeconomics literature; see, for example, \cite{Todd2006}, \cite{Keane2007a},  \cite{Duflo2012}, and \cite{Wolpin2013}, who each use predictions into holdout samples to evaluate the fit of structural models.
Our framework for evaluating policy recommendation methods builds on \cite{Pesaran2002} and \cite{Granger2006}'s notion of using information on the intended use of predictions as a basis for judging methods, foundational work on forecast evaluation theory by \cite{Diebold1995}, \cite{White2003},  and \cite{Hansen2011}, as well as work on prediction-based model comparisons by \cite{Keane2007a}, \cite{Wolpin2007}, \cite{Wolpin2013}, \cite{Schorfheide2012}, and \cite{Schorfheide2016}.
We see our work as complementary to research by  \cite{DellaVigna2016a} on evaluating the quality of experts' {\it ex ante} forecasts by using experimental data {\it ex post}.
\cite{Manski1996, Manski2004}, \cite{Dehejia2005}, \cite{Hirano2009}, \cite{Tetenov2012}, \cite{Kitagawa2015}, and \cite{Athey2017a} address the issue of deriving optimal treatment regimes in decision-theoretic terms in sample; we extend these ideas to the evaluation of methods out of sample as well.

Evaluating methods through the lens of a decision problem is in line with \citet{Arrow1972}, and presents an attractive alternative to model comparison based on Kullback-Leibler divergence (as in \citealp{Vuong1989}). 
Selection based on Kullback-Leibler superiority may be inadequate for a particular decision problem \citep{Marschak1968} .
Finally, our application to the conditional cash transfer programs draws on the synthesis discussions in \cite{Banerjee2015b} and \citet{Garcia-Saavedra2017-CCT-meta}
as well as the specific data and analyses conducted by \cite{DeJanvry2006}, \cite{Todd2006}, and \cite{Attanasio2012} for Mexico and \cite{Benhassine2015} for Morocco.
	
\section{Setting}

Let $\mathcal{M}$ define a set of methods under consideration.  
A method $m \in \mathcal{M}$ produces predictions for a set of real-valued treatment conditions, $\mathcal{T}$, finite.  
Suppose that the status quo treatment condition is given by $t= 0 \in \mathcal{T}$ for all $m \in \mathcal{M}$.  
Considering our application below, we can imagine that $m$ could be a reduced form extrapolation method for predicting responses to different subsidy schedules, which are elements in the set $\mathcal{T}$, including the no subsidy condition, $t=0$.  

We consider a social planner interested in using data from a set of reference contexts to devise welfare-maximizing policies in a set of target contexts.  
Suppose contexts are indexed by $c \in \{1,...,C\}$.  
Let $D_c$ be an indicator variable dividing the contexts into target and reference contexts, such that $D_c = 1$ when $c$ is a target context and $D_c = 0$ when $c$ is a reference context. 
The planner wants to assess the methods in $\mathcal{M}$ according to their ability to assign individuals to treatments from within $\mathcal{T}$ in a way that maximizes the planner's social welfare function in the target contexts, subject to potential constraints on feasible treatments.   

A set of treatment conditions, $\mathcal{T}_c \subseteq \mathcal{T}$, is active in each context $c$, although methods may be able to use data from a set $\mathcal{T}_c$ to generate counterfactual predictions for treatments that are outside this set.  For example, structural methods can generate counterfactual predictions for treatment effects with treatments that have never been implemented. 
For the target contexts, we suppose that {\it ex ante}, $\mathcal{T}_c = \{0\}$, meaning only data on the status quo conditions are available.  
In the reference contexts, {\it ex ante}, $\mathcal{T}_c$ may contain treatments other than just the status quo.  
Thus, in the target contexts, only the $t=0$ treatment is active prior to the implementation of policy, and the social planner is seeking a  recommendation on how to introduce treatments from a feasible set of options so as to maximize welfare.  
The recommendations are based on methods that make counterfactual predictions for the target contexts.  
We observe an $J$-vector of context level characteristics, $V_c$, which contains the indicator, $D_c$, as well as $J-1$ other attributes, such that $\Supp(V_c) = \mathcal{V}\subseteq \mathbb{R}^J$.  

Within a given context $c$, let $i$ index individuals. 
Each context is governed by a probability distribution, $\mathbb{P}_c$, on the following: an individual-level treatment variable, $T_{ic}$, where $\Supp{(T_{ic})} = \mathcal{T}_c$; a $K$-vector of individual-level covariates, $W_{ic}$, with  $\Supp{(W_{ic})} = \mathcal{W}_c \subseteq \mathbb{R}^K$; and an individual-level potential outcome function, $Y_{tic}:\mathcal{T}_c \rightarrow \mathbb{R}$, that maps treatments to outcomes.  
These outcomes measure individual wellbeing from the planner's perspective.
We suppose the following conditions on the outcome data:
\begin{description}
\item[C0.] $| \E_c[Y_{tic}] | < \infty$ and $0 < \Var_c[Y_{tic}] < \infty$ for all $ t \in \mathcal{T}_c$ and $c \in \{1,...,C\}$.  
\end{description}
Let $Y_{it} = Y_{T_{ic}ic}$ be the observed outcome for unit $i$ in context $c$.  
The observed data consist of the context level characteristics and individual-level data, $O_{ic} = (V_c, W_{ic}, T_{ic}, Y_{ic})$, for random samples of individuals $i \in \{1,...,n_c\}$ across contexts $c=\{1,...,C\}$.  

For simplicity, we focus on the case where we can take individual-level treatments to be binary, in which case $\mathcal{T} = \{0,1\}$, where $t=0$ is a status quo control condition and $t=1$ a treated condition.  
Then, individuals' potential outcomes are $\left(Y_{1ic},  Y_{0ic}\right)$ and observed outcomes are given by $Y_{ic} = T_{ic}Y_{1ic}, + (1-T_{ic})Y_{0ic}$. 
For a context $c \in \{1,..,C\}$, consider the following conditions on the data generating process defining $\mathbb{P}_c$: 
\begin{description}
\item[C1.] conditional random assignment, $\left(Y_{1ic},  Y_{0ic} \right) \independent T_{ic}| W_{ic} = w$ for all $w \in \mathcal{W}_c$; 
\item[C2.] overlap, where probabilities of assignment, $p_c(w) := \Pr[T_{ic} = 1|W_{ic} = w]$, are interior such that $0 < b_0 < p_c(w) < b_1 < 1$ for all $w \in \mathcal{W}_c$, and
\item[C3.] stable unit treatment value assumption (SUTVA; \citealp{rubin80_randomization_analysis}), such that for $T_{ic} = t$, $Y_{ic} = Y_{tic}$ with probability 1.
\end{description}
When conditions C1-C3 hold, the conditional (on covariates) potential outcome distributions for both the treatment and control groups in the reference contexts are identified.  

In our setting, we suppose that, {\it ex ante}, we are working with a random sample for which conditions C1-C3 hold in the reference contexts (for which $D_{ic} = 0$).
However, in the the target contexts (for which $D_c = 1$), we suppose that, {\it ex ante}, we have a random sample only for units with $T_{ic} = 0$, while the covariate distributions are also identified.   
As such, {\it ex ante}, the distributions $\mathbb{P}_c[(Y_{0ic}, W_{ic})|V_c, D_c = 0]$, $\mathbb{P}_c[(Y_{1ic}, W_{ic})|V_c, D_c = 0]$, and $\mathbb{P}_c[(Y_{0ic}(0), W_{ic})|V_c, D_c = 1]$ are identified.  
The methods make use of these distributions to derive policy recommendations.  
Once the recommendations are submitted, we then suppose that the conditions C1-C3 obtain in the target contexts, in which case the $\mathbb{P}_c[(Y_{1ic}, W_{ic})|V_c, D_c = 1]$ distributions are revealed.  This allows the planner to judge, {\it ex post}, the quality of the methods in terms of how their recommendations fair with respect to welfare.

Conditions C1-C3 include situations where we have a set of randomized experiments that we can feed into a set of methods to produce {\it ex ante} recommendations for a new target context.
These conditions also admit observational studies in which conditional independence and overlap of treatment assignment holds over the covariate space, $\mathcal{W}_c$, although for the time being we focus on randomized experiments.
Our specification of the potential outcomes, combined with assumption C3, rules out ``interference'' (\citealp[p. 19]{cox58}), including general equilibrium effects. 
We consider this a starting point for our analysis. 
Generalizations could consider situations where interference is present, and derive criteria for judging methods by working with causal quantities that are identified under interference \citep{hudgens_halloran08, aronow_samii2017_interference}.

\section{Planner's Objective and {\it Ex Ante} Recommendations}

We can define the planner's objective in context $c$ in general terms as
$$
\max_{\pi} \mathcal{U}(\mathbb{P}_c(\pi))
$$ 
where $\pi : \mathcal{V} \times \mathcal{W}_c \rightarrow \{0,1\}$ is a treatment assignment rule that takes  context $c$'s contextual attributes and an individual's covariate values and maps them to a treatment assignment.
Then,
$\mathbb{P}_c(\pi)$
is the 
joint potential outcome-covariate distribution
induced by $\pi$.
The restriction to  $\{0,1\}$ for each individual in the target context follows from our focus on binary treatments.  
If a richer set of treatment values $\mathcal{T}$ were under consideration, the rule $\pi$ could assign distribution functions over this $\mathcal{T}$ on the basis of covariate values.  
Sometimes the set of conditional treatment assignments that maximizes this objective {\it ex ante} is non-unique---i.e., there are ties. 
For example, multiple units may share the same covariate value.  
The treatment assignment that maximizes the objective may assign some fraction of such units to treatment.
Then, all permutations of assignments would yield the same {\it ex ante} value for the objective. 
We assume that $\pi$ encodes a tie-breaker rule that is unconditionally statistically independent and equalizes probability of treatment for such tied units.

Current approaches to this problem (as in \citealp{Manski2004} and \citealp{Tetenov2012}) consider social welfare that is linear in expected treatment and control outcomes, what \citet{Hirano2019} refer to as utilitarian welfare.
We adopt the same approach.\footnote{For the current analysis, we put to the side considerations related to the planner's sensitivity to inequality or  uncertainty \citep{Dehejia2008}, as well as asymmetry in the planner's preferences toward different treatments, such as status quo bias \citep{Tetenov2012}.}
We also incorporate treatment opportunity costs by adding a cost-effectiveness term to the planner's objective.
This increases the planner's sensitivity to over- and under-estimation of the treatment effects in generating treatment rules. 
We suppose that we are operating in context $c$, and therefore suppress the associated indexing except when necessary for clarification.  
Thus, we define 
\begin{small}
	\begin{align*}
	&\mathcal{U}(\mathbb{P}(\pi)) = \\
	&\E \left\{ \pi(W_i) \E \left[ Y_{1i} - Y_{0i} - \frac{C(W_i,Y_{1i})}{\kappa} \bigg | W_i \right] + (1-\pi(W_i)) \E \left[ \frac{C(W_i,Y_{1i})}{\kappa} - (Y_{1i} - Y_{0i}) \bigg  | W_i \right]\right\} \\
	&= 2 \E \left\{ \pi(W_i) \E \left[  Y_{1i} - Y_{0i} - \frac{C(W_i,Y_{1i})}{\kappa}  \bigg | W_i  \right] \right\} -  \E \left[  Y_{1i}   - Y_{0i} - \frac{C(W_i,Y_{1i})}{\kappa}  \right],
	\end{align*}
\end{small}
where  $\kappa > 0$ is the level of cost effectiveness at which the planner is indifferent between allocating funds to treat an individual and using the funds for some alternative purpose \citep{Garber1997, Dhaliwal2011}, and $C(w,y) \ge 0$ is the  marginal cost function for treating an individual with covariate value $w$ and treated outcome $y$.  
Choice of a treatment assignment with this specification will be equivalent to operating under a budget constraint that must be satisfied in expectation {\it ex ante} (cf. \citealp{Bhattacharya2012}, versus {\it ex post}, as in \citealp{Kitagawa2015}), where the marginal value of relaxing the budget constraint is $1/\kappa$.  

The prediction method $m$ translates the planner's objective into an {\it ex ante} recommendation for a treatment assignment rule, $\pi_m$. Method $m$ does not have access to $\mathbb{P}$ and rather must rely on some approximation, $\widehat{\mathbb{P}}_m$.  As such, method $m$ solves,
\begin{align*}
\pi_m = \text{arg}\max_{\pi} \mathcal{U}(\widehat{\mathbb{P}}_{m}(\pi)),
\end{align*}
where $\widehat{\mathbb{P}}_{m}(\pi)$ can be informed by samples from the reference contexts and status quo data from the target.   Under our specification of the welfare function, we have
\begin{align*}
\pi_m = \text{arg}\max_{\pi } \widehat{\E}_m \left\{ \pi(W_i) \widehat{\E}_m \left[  Y_{1i} - Y_{0i} - \frac{C(w,Y_{1i})}{\kappa}   \bigg | W_i  \right] \right\}.
\end{align*}
In practice, we will use a ``plug-in'' rule where $\hat{E}_m[\cdot]$ is generated from model $m$ estimated on the reference data and status quo data from the target \citep{Hirano2019}. The problem that method $m$ solves generates a decision rule analogous to \cite{Manski2004}'s conditional empirical success rule: 
\begin{align*}
\pi_m(w) = 1 \left\{  \widehat{\E}_m \left[  Y_{1i} - Y_{0i} - \frac{C(w,Y_{1i})}{\kappa}  \bigg | w  \right] \geq 0  \right\}.
\end{align*}
In our application to conditional cash transfer programs, the cost function takes a form that allows for further simplifications.  Those details are presented below.

\section{{\it Ex Post} Inference}

The welfare contrast for two methods, $l$ and $m$, is given by
\begin{align*}
	\Delta_{lm} &=   \E\left\{ (\pi_l(W_i) - \pi_m(W_i))  \E \left[  Y_{1i} - Y_{0i} - \frac{C(W_i,Y_{1i})}{\kappa}  \bigg  | W_i  \right]   \right\},
\end{align*}
where $2 \Delta_{lm}$ would correspond to the difference in utilities as per our specification of $\mathcal{U}(\mathbb{P}(\pi))$ above. 
In constructing this {\it ex post} contrast, we condition on any data used to generate the $\hat{\mathbb{P}}$s.  
The welfare constrast $\Delta_{lm}$ is non-zero for values of $w$ for which the methods disagree on treatment assignment (i.e., $w$ for which $\pi_l(w) \ne \pi_m(w)$). 
It equals the marginalized value of the minimum-benefit-adjusted conditional average treatment effects when $m$ says to treat and $l$ says not to (or the reverse).

{\it Ex post}, we suppose that we obtain a random sample of experimental units in the target context for which conditions C1-C3 hold.  
We assume that in this {\it ex post} experiment, treatment assignment probabilities are given by $p(W_i)$, and that these probabilities are known. 
The experimental data in our target context allow us to estimate this welfare contrast.  
Given a random sample of size $N$ in the target context, we consider an estimator for the linear welfare contrast based on inverse-probability of treatment weighting with normalized weights.  
This estimator is efficient among consistent estimators that avoid modeling of either the potential outcome surfaces or conditional treatment probabilities (\citealp{Hirano2003}; \citealp{Lunceford2004}; \citealp[35]{Imbens2009}).  
We define the estimator as
\begin{align*}
\hat \Delta_{lm} = &\frac{\sum_{i = 1}^{N} \frac{T_i}{p(W_i)} \left(\pi_l(W_i) - \pi_m(W_i)\right)\left(Y_i - \frac{C(W_i,Y_i)}{\kappa}\right)}{\sum_{i = 1}^{N} \frac{T_i}{p(W_i)} } \\
& - \frac{\sum_{i = 1}^{N} \frac{(1-T_i)}{1-p(W_i)} \left(\pi_l(W_i) - \pi_m(W_i)\right)Y_i}{\sum_{i = 1}^{N}\frac{(1-T_i)}{1-p(W_i)}}.
\end{align*}
Inference for this estimator is based on the random sampling of $(Y_i, T_i, W_i)$ values from $\mathbb{P}$ under conditions C0-C3:

\begin{prop}\label{prop:Delta}
Under conditions C0-C3, as $N \rightarrow \infty$
$$
\frac{\sqrt{N}\left(\hat \Delta_{lm} - \Delta_{lm}\right)}{\sqrt{\hat V_{\Delta_{lm}}}} \overset{d}{\rightarrow} N(0,1),
$$
where
\begin{align*}
\hat V_{\Delta^g_{lm}} = & \frac{\sum_{i=1}^N \frac{T_i}{p(W_i)^2}\left[(\pi_l(W_i) - \pi_m(W_i)) (Y_i - \frac{C(W_i,Y_i)}{\kappa}) - \hat \delta_1\right]^2}{\sum_{i = 1}^{N} \frac{T_i}{p(W_i)}} \\
& + \frac{\sum_{i=1}^N \frac{1-T_i}{[1-p(W_i)]^2}\left[(\pi_l(W_i) - \pi_m(W_i))Y_i - \hat \delta_0\right]^2}{\sum_{i = 1}^{N} \frac{1-T_i}{1-p(W_i)}},
\end{align*}
and
\begin{align*}
&\hat \delta_ 1 = \frac{\sum_{i = 1}^{N} \frac{T_i}{p(W_i)} \left(\pi_l(W_i) - \pi_m(W_i)\right)(Y_i - \frac{C(W_i,Y_i)}{\kappa})}{\sum_{i = 1}^{N} \frac{T_i}{p(W_i)} } \\
&\hat \delta _0 = \frac{\sum_{i = 1}^{N} \frac{1-T_i}{1-p(W_i)} \left(\pi_l(W_i) - \pi_m(W_i)\right)Y_i}{\sum_{i = 1}^{N}\frac{1-T_i}{1-p(W_i)}}.
\end{align*}
\end{prop}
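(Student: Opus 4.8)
The plan is to recognize $\hat \Delta_{lm}$ as the difference of two normalized (H\'ajek) inverse-probability-weighted estimators, prove consistency, linearize each ratio into an asymptotically linear form, apply a central limit theorem, and finally check that $\hat V_{\Delta_{lm}}$ consistently estimates the limiting variance. Write $g_i = \pi_l(W_i) - \pi_m(W_i) \in \{-1,0,1\}$ and $\tilde Y_i = Y_i - C(W_i,Y_i)/\kappa$, and set $\theta_1 = \E[g_i(Y_{1i} - C(W_i,Y_{1i})/\kappa)]$ and $\theta_0 = \E[g_i Y_{0i}]$, so that $\Delta_{lm} = \theta_1 - \theta_0$. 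With $A_1 = N^{-1}\sum_i (T_i/p(W_i)) g_i \tilde Y_i$, $B_1 = N^{-1}\sum_i T_i/p(W_i)$, and $A_0, B_0$ the corresponding control-group averages, the estimator is $\hat \Delta_{lm} = \hat \delta_1 - \hat \delta_0$ with $\hat \delta_1 = A_1/B_1$ and $\hat \delta_0 = A_0/B_0$.

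First I would establish consistency. Under C2 the weights $1/p(W_i)$ and $1/(1-p(W_i))$ are bounded, and C0 guarantees that the relevant moments are finite, so the weak law of large numbers applies to each of $A_1, B_1, A_0, B_0$. By C1 and C3, conditioning on $W_i$ and using $T_i \independent (Y_{1i},Y_{0i}) \mid W_i$ gives $\E[A_1] = \E[g_i \E[Y_{1i} - C(W_i,Y_{1i})/\kappa \mid W_i]] = \theta_1$ and $\E[B_1] = 1$, and symmetrically $\E[A_0] = \theta_0$, $\E[B_0] = 1$. The continuous mapping theorem then yields $\hat \delta_1 \arrowp \theta_1$, $\hat \delta_0 \arrowp \theta_0$, and hence $\hat \Delta_{lm} \arrowp \Delta_{lm}$.

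Next I would obtain the asymptotically linear representation. Expanding the ratio $A_1/B_1$ to first order about $(\theta_1,1)$ gives $\hat \delta_1 - \theta_1 = (A_1 - \theta_1 B_1) + R_{1N}$, and because $B_1 \arrowp 1$ while $\sqrt N(B_1 - 1) = O_p(1)$, the remainder obeys $\sqrt N R_{1N} = o_p(1)$; the control term behaves identically. Collecting both expansions,
\begin{align*}
\sqrt N\left(\hat \Delta_{lm} - \Delta_{lm}\right) &= \frac{1}{\sqrt N}\sum_{i=1}^N \psi_i + o_p(1), \\
\psi_i &= \frac{T_i}{p(W_i)}\left(g_i \tilde Y_i - \theta_1\right) - \frac{1 - T_i}{1 - p(W_i)}\left(g_i Y_i - \theta_0\right).
\end{align*}
The $\psi_i$ are i.i.d.\ with mean zero and, by C0 and C2 (together with a mild integrability condition on $C(\cdot,\cdot)$ so that $\tilde Y_i$ is square-integrable), they have finite variance $V = \E[\psi_i^2]$, so Lindeberg--L\'evy gives $\sqrt N(\hat \Delta_{lm} - \Delta_{lm}) \overset{d}{\to} N(0,V)$. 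The key simplification is that $T_i(1-T_i) = 0$ annihilates the cross term, so
\[
V = \E\left[\frac{T_i}{p(W_i)^2}\left(g_i \tilde Y_i - \theta_1\right)^2\right] + \E\left[\frac{1 - T_i}{(1 - p(W_i))^2}\left(g_i Y_i - \theta_0\right)^2\right].
\]

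Finally I would verify $\hat V_{\Delta_{lm}} \arrowp V$. Each of its two terms is a sample average divided by a denominator converging to $1$; substituting the limits $\hat \delta_1 \arrowp \theta_1$, $\hat \delta_0 \arrowp \theta_0$ (continuous mapping theorem) and applying the weak law of large numbers to the numerators shows that $\hat V_{\Delta_{lm}}$ converges in probability to the two-term expression for $V$ displayed above. Provided $V > 0$---which holds whenever $\pi_l$ and $\pi_m$ disagree on a covariate set of positive probability---Slutsky's theorem combines the central limit theorem with $\hat V_{\Delta_{lm}} \arrowp V$ to deliver the stated result. I expect the linearization to be the main obstacle: one must confirm that the ratio remainder is $o_p(1)$ at the $\sqrt N$ scale and correctly assemble the joint influence function of two H\'ajek estimators computed on the same sample. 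Once the cross term is seen to vanish, the remainder reduces to routine applications of the law of large numbers and the central limit theorem.
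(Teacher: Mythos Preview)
Your proposal is correct and follows essentially the same route as the paper's proof. The paper frames $\hat\delta_1$ and $\hat\delta_0$ as solutions to score equations $\psi_t(\hat\delta_t) = (T_i^{1-t}(1-T_i)^t / p(W_i)^{1-t}(1-p(W_i))^t)\,[g_i\,h_t(Y_i) - \hat\delta_t]$ and then invokes standard GMM asymptotics \citep{Newey1994, Lunceford2004} to obtain $V = \E[\psi_1(\delta_1)^2 + \psi_0(\delta_0)^2]$, whereas you carry out the ratio linearization explicitly and combine the two influence functions into a single $\psi_i$; these are the same argument, with your version simply making transparent what the GMM machinery does under the hood.
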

All proofs are contained in the appendix.  
Conditional on $W_i = w$, the recommendations, $\pi_m(w)$ and $\pi_l(w)$, are fixed.  
Our uncertainty about the welfare contrast is due to sampling and treatment assignment variation in the experimental data gathered in the target context that we use for the {\it ex post} assessment.\footnote{Treating the treatment assignment rules as fixed means that we can have expert opinion among the methods considered in $\mathcal{M}$. 
\citet{Diebold2015} makes this point in reviewing \citet{Diebold1995} and the literature following it, drawing a distinction between  \citet{Diebold1995} and \citet{West1996}, which additionally considers uncertainty arising from the samples on which models are fitted to generate predictions.}

The {\it ex post} situation that we consider is simple in that all randomization and sampling occurs at the unit level and there is no causal interference in the outcome data generating process.
Certainly the analysis could be extended to consider cluster randomization or cluster sampling, covariate adjustment, and targeting quantities that are identified under interference.  
For the present analysis, we stick with this relatively simple case.

Proposition \ref{prop:Delta} is sufficient to perform inference for any pair of methods.
\citet{Hansen2011} provide a sequential multiple testing algorithm for establishing a ``model confidence set'' (MCS) of level $1-\alpha$, which allows one to distinguish a set of best performing algorithms with an asymptotic error rate of $\alpha$.

\section{Empirical Application}

For an empirical illustration, we use data from Mexico and Morocco on the effects of conditional cash transfers (CCTs) on primary school enrollment.
We consider a policy scenario where a planner in Morocco is seeking recommendations for implementing a conditional cash transfer (CCT) program.
The planner's objective is defined as above, accounting for opportunity costs.
We use data from randomized evaluations of the PROGRESA  program in Mexico \citep{Schultz2004, Behrman2005, DeJanvry2006, Todd2006, Attanasio2012} and the TAYSSIR program in Morocco \citep{Benhassine2015}.
To construct the {\it ex ante} evaluations, we are limited to using the full data from Mexico and then the covariate and control group outcome data from Morocco.
The {\it ex post} assessment is done using the full experimental data from Morocco. 

\subsection{General Setup}
The outcome, $Y_{ic}$, is the school enrollment of child $i$. 
The covariate set, $\mathcal{W}_c$, is defined as the intersection pre-treatment characteristics on the PROGRESA and Moroccan questionnaires.  
The cost-effectiveness benchmark, $\kappa$, is based on gains from school enrollment. \cite{Montenegro2014} report a 10\% Mincer\footnote{controlling only for potential experience and its square.} earnings premium for each additional year of schooling in Morocco.  The average market earner in the \cite{Benhassine2015} sample gets \$1578.20 per year.  At a 5\% discount rate, the net present value of an additional year spent in school is approximately \$1,000.\footnote{Over 40 years of work.}
The treatment cost function, $C(W_i, Y_{1i})$, is based on the value of the conditional transfer. In Morocco, for 6-7 year olds, this amounts to the following:
$$
60 \text{ MAD per month} \times 12 \times  \frac{0.1\text{ years of ed.}}{100 USD} \times \frac{\text{1 USD}}{\text{8 MAD}} \times Y_{1i} = 0.09 Y_{1i} = \frac{C(w, Y_{1i})}{\kappa}.
$$
Correspondingly, for 8-9 year olds the transfer value is 80 MAD per month,  which means $0.12 Y_{1i} = \frac{C(w, Y_{1i})}{\kappa}$, and for 10+ year olds, the transfer is 100 MAD per month, and so $0.15  Y_{1i} = \frac{C(w, Y_{1i})}{\kappa}$.  If we define these age specific multipliers as $g(Age_i)$, then the objective for model $m$ reduces to,
\begin{align*}
	\pi_m &= \max_{\pi} \hat{\E}_m \left\{ \pi(W_i) \hat{\E}_m \left[ (1 - g(Age_i))Y_{1i} - Y_{0i} | W_i \right] \right\},
\end{align*}
generating the decision rule
\begin{align*}
	\pi_m(w) = 1 \left\{ \hat{\E}_m \left[ (1 - g(Age_i))Y_{1i} - Y_{0i} \bigg | w \right]  \geq 0 \right\}.
\end{align*}
This implies evaluating the signs of estimated conditional treatment effects on the adjusted outcome,
$$
Y^{adj}_{ic} = T_{ic}(1-g(\text{Age}_{ic})Y_{ic} + (1-T_{ic})Y_{ic}.
$$

We allow each method all observations and variables from PROGRESA ($D_c = 0$), and observations from TAYSSIR with $D_c = 1, T_{ic} = 0, U_{ic} \leq 0.5 $ where we assign $U_{ic} \sim U[0,1]$ independently from all other variables once (i.e., a random 50\% split of the TAYSSIR control group).  The methods use these data to compute $\pi_m(w)$ as defined above and then for methods $l$ and $m$, we compute $\widehat{\Delta}_{lm}$ as
\begin{align*}
	\hat \Delta_{lm} = &\frac{\sum_{\{i : D_c = 1, T_{ic} = 1\}} \frac{T_i}{p_c(W_i)} \left(\pi_l(W_i) - \pi_m(W_i)\right)(1 - g(Age_i))Y_i }{\sum_{\{i : D_c = 1, T_{ic} = 1\}} \frac{T_i}{p_c(W_i)} } \\
	& - \frac{\sum_{ \{i : D_c = 1, T_{ic} = 0, U_{ic} > 0.5  \} } \frac{(1-T_i)}{1-p_c(W_i)} \left(\pi_l(W_i) - \pi_m(W_i)\right)Y_i}{\sum_{\{i : D_c = 1, T_{ic} = 0, U_{ic} > 0.5\} }\frac{(1-T_i)}{1-p_c(W_i)}},
\end{align*}
where for the TAYSSIR experiment, $p_c(w)$ is known = $p_c$.  The estimate for the variance is constructed analogously.

\paragraph{Methods}

We consider methods that are already available from the current literature and that are straightforward to implement.    
We highlight assumptions on the joint distribution of random variables for each method.
However, we are not concerned with testing these assumptions directly, as in \cite{Allcott2012}, \cite{Dehejia2014}.
Instead, we list them as part of the specification of each method.
We are interested in assessing the relative empirical performance of the methods, all of which we view as likely misspecified \citep{Wolpin2007}.

The first two methods rely on reduced-form extrapolation of conditional treatment effects, as per, e.g., \citet{JosephHotz2005} and \citet{Dehejia2016}.  
The reduced form approaches we use include a ``low-tech'' version that simply takes the age-sex-specific treatment effects from PROGRESA and extrapolates them using the age-sex distribution in the TAYSSIR sample.    
We also use a ``high-tech'' approach that applies the generalized random forest estimator for heterogeneous treatment effects, as proposed by \citet{Athey2019-grf}.  

The next two methods include structural counterfactual predictions.  
This includes a static semi-parametric structural model for PROGRESA based on  \citet{Todd2010a} as well as a dynamic, parametric structural model for PROGRESA based on \citet{Attanasio2012}.

\subsection{Method 1: Extrapolating Age-Sex Conditional Effects}
The ``low-tech'' reduced form approach uses the adjusted outcomes ($Y^{adj}_{ic}$) from the PROGRESA data to estimate conditional treatment effects in age-sex strata for girls and boys with ages ranging from 6 to 16. 
These conditional treatment effects are then extrapolated to Morocco, which is justified given the following assumption.
\begin{description}
	\item[Age-Sex 1.] Unconfounded location (\citet{JosephHotz2005}) given age and sex.
	$$
		\E[(1-g(Age_i))Y_{1i} - Y_{0i} | D_c, Age_i, Sex_i] = \E[(1-g(Age_i))Y_{1i} - Y_{0i} | Age_i, Sex_i].
	$$
\end{description}

 Table \ref{tab:age-sex-cates-mex} shows the raw treatment effects for the age-sex strata in PROGRESA.
 Almost all of these are positive, so any treatment assignment based on them will be very liberal.
 We see exactly this in the second column of Table \ref{tab:agesex-all-welfare} which compares the effects of treatment assignments based on age-sex extrapolation to simply treating everyone, as a researcher might be inclined to recommend after seeing the large effect of TAYSSIR on enrollment.
 Without adjusting for cost-effectiveness, age-sex extrapolation recommends treating 97\% of the Moroccan sample.
  
 Table \ref{tab:age-sex-adj-cates-mex} lists the enrollment gains by age-sex strata, after adjusting for cost-effectiveness.
 The cost-effectiveness threshold we've imposed in Morocco is quite stringent: only a few subgroup effects remain positive.
 The first column of Table \ref{tab:agesex-all-welfare} shows the implications.
 Now the age-sex extrapolation only recommends treating 13\% of children in Morocco.
 This results in a statistically significant 3.5 percentage point decrease in enrollment gain (the third row and forth row), but a 6.5 percentage point increase in cost-effectiveness-adjusted enrollment gain (the fifth and sixth row).
The increase comes from the fact that age-sex based-extrapolation avoids recommending treatment for younger children whose enrollment is already almost universal.
Universal untreated enrollment has to be paid for, which is expensive, and leaves little room for enrollment gain so the cost is not worth it.

\subsection{Method 2: Generalized Random Forest-Based Extrapolation}
\label{sec:grf}
The ``high-tech'' reduced form approach fits generalized random forest (GRF) models to the adjusted outcomes in the PROGRESA data.
At present we include as covariates the child's pre-treatment enrollment status, years of education, literacy status, age, and sex, and then for the child's household, the number of children, whether the head is male, whether it is single-parent, whether the father is alive, whether the mother is alive, whether the child lives with the father, or whether the child lives with the mother.
In cases of item-level missing data, we impute a zero for the missing value and then accompany the variable with a separate indicator variable for whether the value is missing.  
Defining $W_i$ the vector of variables described above, GRF-based extrapolation is justified under the following assumption.
\begin{description}
	\item[GRF1.] Unconfounded location given $W_i$.
	$$
	\E[(1-g(Age_i))Y_{1i} - Y_{0i} | D_c, W_i ] = \E[(1-g(Age_i))Y_{1i} - Y_{0i} | W_i ].
	$$
\end{description}

Using these variables, we fit a generalized random forest (GRF) using the algorithm written by \citet{Athey2019-grf}.
We set the number of trees to 2000 with a minimal leaf size of 2 units.  
We use 50\% splits of the data to build trees on one random split of the data and then select the error-minimizing tree pruning by evaluating predictions with different candidate pruning levels on the data from the other split. 
These settings do not depart very much from the defaults set by \citet{Athey2019-grf}.

There are two approaches to using GRFs to characterize effect heterogeneity.  
The first is to fit GRFs to the treated and control outcomes separately, and then combine those fits to construct estimates of unit-level treatment effects.
This approach targets loss on the level of treated and control outcomes, and then indirectly targets effect heterogeneity.
The second is to fit a GRF with a loss function that is specifically targeted to effect heterogeneity.  
In simulations and various empirical tests on the PROGRESA sample, we found that the first approach was substantially better for our application. 
This was surprising, but the reason seemed to be that the approach targeting effect heterogeneity directly tended to regularize too heavily, and therefore did not discriminate strongly enough between classes of units with large or small (or even negative) effects.
As such, our analysis applies the method that models the treated and control outcomes separately.

Tables \ref{tab:mexico-grfpo-noce-varimp} and \ref{tab:mexico-grfpo-ce-varimp} show variable importance summaries provided by \citet{Athey2019-grf} for random forests fitted to enrollment without and with cost-effectiveness adjustment, respectively.
The summaries essentially count instances where a variable is used in a split, where instances at the top end of the tree get higher weight than at the lower end. 
Age, baseline enrollment, and number of years of education completed emerge as most important but several other variables matter as well, including parental education and presence.

Tables \ref{tab:morocco-grfpo-noce-catesum-boys} and \ref{tab:morocco-grfpo-noce-catesum-girls} display treatment effects on enrollment, aggregated by age-sex strata to allow for comparisons to the ``low-tech'' reduced form approach.
The GRF detects a great deal of treatment effect heterogeneity within strata, with positive and negative effects for all strata.
Table \ref{tab:morocco-grfpo-noce-treat-rate} show the share in each stratum recommended for treatment, which increases quickly with age.

Tables \ref{tab:morocco-grfpo-ce-catesum-boys} and \ref{tab:morocco-grfpo-ce-catesum-girls} gives treatment effect predictions within strata, after adjusting the treated outcome for cost-effectiveness.
The average treatment effect is predicted to be negative in most strata, with the exception of those including children aged 13 and up.
Table \ref{tab:morocco-grfpo-ce-treat-rate} shows the treatment recommendations.
Despite negative average effects within strata, the treatment rates are still quite high.

Table \ref{tab:grf-agesex-welfare} evaluates GRF's treatment recommendations relative to simple age-sex-based extrapolation from PROGRESA.
The results are perhaps surprising.
GRF treats more individuals, resulting in 1 percentage point more enrollment gain.
But the cost outstrips the gain and the adjusted enrollment gain is negative and statistically significant.

We believe this disappointing performance of GRF versus a much simpler method is due to a kind of ``contextual overfitting''.
While the GRF's regularization guards against within-context overfitting, the dimensions of heterogeneity generating a large share of positive treatment effects even after adjusting for cost-effectiveness may represent quirks of PROGRESA's implementation in Mexico.
It remains to be seen whether this problem persists when we add reference contexts and context characteristics to our exercise.

\subsection{Method 3: Semi-Parametric Structural Approach}
\label{sec:sps_model}
\subsubsection{\citet{Todd2010a}'s Non-Parametric Structural Model}
The third method takes as a starting point the non-parametric structural (NPS) model of school attendance proposed in \citet{Todd2010a}.
For parsimony in the non-parametric step of our own approach (described in the next section) we use the simplest version of the NPS model, which considers the enrollment decision for each child independently.
In this version of the model, households solve the following static utility maximization problem with utility depending on household consumption $c$, the child's enrollment status $y$.
\begin{align}
	&\max_{y \in \{0, 1\}} \ U(c,y;w,\epsilon) \notag \\
	\label{eq:nps_budget_constraint}
	&\text{subject to } c = n + e(1-y).
\end{align}
Abusing notation slightly, $W_i$ is here redefined as $W_i \setminus E_i, N_i$ where $E_i$ represents the child's wage offer, $N_i$ household income for child $i$ excluding $i$'s own earnings.
The redefined $W_i$ and $V_c$ are observed shifters in preference for child schooling and $\epsilon_i$ is an unobserved shifter.

Optimal school attendance is given by
$$
	y_0 = \phi(n,e; w, \epsilon) = 1\{U(n,1;w,\epsilon) > U(n+e,0; w, \epsilon)\}.
$$
Now modify the budget constraint by introducing the treatment, a grant $g$ paid only when the child attends school:
\begin{align*}
	c&= n+ e(1-y) + g y \\
	&= (n + g) + (e - g)(1-y).
\end{align*}
It is easy to see that $y_1$, optimal school attendance with the grant program in place, can be obtained by plugging a modified version of non-child income ($n+g$) and modified child wage offer ($e-g$) into the same $\phi(\cdot)$ function as before: 
$$
	y_1 = \phi(n+g, e-g; w, \epsilon).
$$

In addition to this structure, NPS imposes a key assumption on the distribution of random variables:
\begin{description}
	\item[NPS1.] Exogeneity of non-child income and child wage offers: $\epsilon_{i} \independent N_i, E_i  | W_i$.
\end{description}
The NPS CATE for children with characteristics $w, n, e$ is identified as
\begin{align}
\notag
\E [ Y_{1i} | N_i = n, E_i = e, W_i = w] - & \E [ Y_{0i} | N_i=n, E_i=e, W_i = w ] \\
\notag
= \E[ Y_{0i} | N_i = n+g, E_i = e - g, W_i = w] - & \E[ Y_{0i} | N_i=n, E_i=e, W_i = w] \\
\label{eq:nps_cate}
= \E[ Y_i | T_i = 0, N_i = n+g, E_i = e - g, W_i = w] - & \E[ Y_i | T_i=0, N_i=n, E_i=e, W_i = w ] 
\end{align}
Crucially, \eqref{eq:nps_cate} is identified in the data provided to predictors for context $c$.

\subsubsection{Our Implementation: the Semi-Parametric Structural (SPS) Approach}
As in \citet{Todd2010a}, we do not observe wages for most children in our contexts.
In addition, in some of our contexts non-child income is not directly observable since households are smallholder farmers.
We therefore make some modifications to the NPS model.

Each child's wage offer is given by exponentiating the conditional expectation of her earnings given her age,  gender, industry of work\footnote{If observed, otherwise we set this to the most common industry for child workers.}, and locality of residence.\footnote{\citet{Todd2010a}, in contrast set children's wage offers to the village-level wage for an agricultural worker but this variable is not available in many of our contexts, and in fact in only half of the PROGRESA evaluation villages they consider.}
\begin{description}
	\item[SPS1.] Conditional expectation wage offers:
	\begin{align*}
		E_i &= \exp( \E [ \log(E_i) | Age_i, Sex_i, Industry_i, Locality_i ] ) \\
		  	&= e_0(Age_i, Sex_i, Industry_i, Locality_i).
	\end{align*}
\end{description}
This is very close to how we handle wages in the dynamic parametric structural model (DPS) described in \ref{sec:dps_model}, except that in the DPS model agents optimize with respect to (1) the sample mean function $\hat{\E}[ E_i | Age_i, Sex_i, Industry_i, Locality_i ]$ rather than (2) the population expectation function $ \E[ E_i | Age_i, Sex_i, Industry_i, Locality_i ]$ (use of (1) in DPS exactly follows AMS).
We handle non-child income in the same way: by computing the sum of the expected earnings of all the members in $i$'s household, excluding $i$.
\begin{description}
	\item[SPS2.] Conditional expectation non-child income:
	\begin{align*}
		N_i &= \sum_{j \in Household_i, j \neq i} \exp ( \E [ \log(E_j) | Age_j, Sex_j, Industry_j, Locality_i ] ) \\
			&= n_0(Household_i).
	\end{align*}
	Additionally, assume $E_i$ is missing at random, effectively following \cite{Todd2010a}.
\end{description}

With industry and locality indicators, the set of conditioning variables is high-dimensional.
We therefore estimate the conditional expectation function by LASSO, which is justified under the following assumption.
\begin{description}
	\item[SPS3.] Approximately sparse linear representation of expected wages.
	\begin{align*}
		&\E [ \log(E_i) | Age_i, Sex_i, Industry_i, Locality_i ] \\
		&\approx \eta_0 + \eta_1 Age_{i} + \eta_2 (Age_{i} - 21 )_{+} + \eta_3 male_i + \zeta_{industry} + \lambda_{province} + \xi_{locality} + \nu_{i}
	\end{align*}
	where approximation is in the sense of \cite{Belloni2012}.
	The notation $(\cdot)_{+}$ indicates the positive part of the expression in parentheses.
	$\zeta_{industry}, \lambda_{province}$, and $\xi_{locality}$ are fixed effects.
	Province is the top level subnational geographic unit (like a US state), which is defined for each context following IPUMS-International.
	Locality is the smallest geographic level available in each dataset.
	We do not subject the linear spline in age to the LASSO penalty because the substantial positive gradient of wage in age for youths is a key driver of the opportunity cost of enrollment.
	Similarly, we always include the industries employing the majority of children in each context.\footnote{We exclude education following AMS. In practice, and echoing AMS's findings, education has little effect on earnings in our rural contexts. Education is never selected by LASSO and its inclusion has almost no impact on other variables coefficients or the selected penalty term.} \label{ass:sparse_regression}
\end{description}
We select the LASSO penalty term by 5-fold least squares cross-validation.
Finally, we replace assumption NPS1 with 
\begin{description}
	\item[SPS4.] $\E[\varepsilon_{i} | e_0(Age_i, Sex_i, Industry_i, Locality_i), n_0(Household_i), Sex_i] = 0$ where $\varepsilon_i$ is defined in the conditional expectation equation
	\begin{align}
		\label{eq:sps_exogeneity}
		Y_i = m_0(e_0(Age_i, Sex_i, Industry_i, Locality_i), n_0(Household_i), Sex_i) + \varepsilon_{i}.
	\end{align}
\end{description}

Our approach is the following.
We estimate $e_0(Age_i, Sex_i, Industry_i, Locality_i)$ in a first stage and use the resulting $\hat{e}$ in place of $e_0$ in $n_0$.
We then estimate $m_0$, our analog of $\E[ Y_i | T_i=0, N_i=n, E_i=e, W_i = w ] $ from equation \eqref{eq:nps_cate}, with $\hat{e}$ and $\hat{n}$ in place of $e_0$ and $n_0$.
$\hat{e}$ and $\hat{n}$ are generated regressors in the sense of \cite{Mammen2012} who show that the second stage estimate is consistent for $m_0$.
Our second stage implementation is by mixed datatype kernel regression with bandwidth selected by cross-validation using the \texttt{np} package in R (\cite{Hayfield2008}).
We then use the second stage estimate to generate the predicted SPS CATE just as in equation \eqref{eq:nps_cate}.

\subsubsection{Results}
Figures \ref{fig:sps_original_density}, \ref{fig:sps_counterfactual_density}, and \ref{fig:sps_bivariate_regression} provide some intuition for how SPS works in practice.
Figure \ref{fig:sps_original_density} shows the bivariate density of child wage offers in the held out portion of the control group on the y-axis and non-child income on the x-axis.
Figure \ref{fig:sps_counterfactual_density} shows how the model predicts the effect of the TAYSSIR grant.
Effective child wage offers decrease, shifting the density down, and non-child income is increased slightly, moving density slightly right.

The impact of this shift depends on the non-parametric regression of enrollment as a function of child wage offer and non-child income in the portion of the control group (non-holdout) available to predictors, which is depicted in Figure \ref{fig:sps_bivariate_regression}.
The plot shows that non-child income has little association with enrollment conditional on a child's wage offer.
child wage offers do matter, however, particularly in the region where probability mass is being moved.

Table \ref{tab:sps-agesex-welfare} shows that the SPS approach tends to overstate the magnitude of treatment effects.\footnote{This turns out to be true for both positive and negative treatment effects, with specifics to be included in later drafts.}
Relative to age-sex based extrapolation from PROGRESA, SPS is only slightly better in cost-adjusted enrollment gain terms than making all children in Morocco eligible for TAYSSIR grants.
We think this is because the non-parametric second step is simply too volatile, despite having been regularized by cross validation based bandwidth selection.

\subsection{Method 4. Dynamic Parametric Structural}
\label{sec:dps_model}
\paragraph{Model}
Our Dynamic Parametric Structural (DPS) model is largely based on \citet{Attanasio2012}, with a few modifications made to fit data availability and improve in-sample fit in our contexts, following standard practice (see e.g. \citet{Wolpin2013}).
The main dynamic features of the model are (1) a finite horizon - children can only be enrolled until age 17 and so can only accumulate subsidies up to this point and (2) persistence of education choices - the flow utility of enrollment is affected by the number of years the child is behind her age-appropriate grade level.
Uniquely among our methods, the DPS model allows modeling of the entire subsidy schedule by age.

\paragraph{Flow utility}
The in-period utility for child $i$\footnote{Again, schooling decisions are made by the household for each child independently.} at age $a$ in school and work are $u^S_{ia}$ and $u^W_{ia}$, respectively:
\[
	\begin{array} { l } { u _ { ia } ^ { S } = \gamma \delta g _ { ia } + \mu_i + \psi ^ { \prime } z _ { ia } + 
	b' \cdot \text{yrs\_behind}_{ia} + 
	1 \left( p _ { ia } = 1 \right) \beta ^ { p } x _ { ia } ^ { p } + 1 \left( s _ { ia } = 1 \right) \beta ^ { s } x _ { i a } ^ { s } + \varepsilon _ { i a } } \\ 
{ u _ { ia } ^ { W } = \delta w _ { ia } } \end{array}\]
$g_{ia}$ represents the grant $i$ is entitled to given her completed years of schooling and other characteristics (for example gender).
$\mu_i$ is a child-specific shifter to the preference for enrollment, drawn from a discrete distribution with $K$ points of support.
We will refer to each point of support as an unobserved child ``type''.
In practice, we estimate the model with three types.
$z_{ia}$ includes other observed covariates affecting preference for enrollment.
Specifically it includes $a$ and a dummy variable for whether $i$'s father received any formal schooling.
$yrs\_behind_{ia}$ is a three-element vector of dummy variables with indicators for being behind grade level by 1 year, 2 years, or $\geq 3$ years.
$p_{ia}$ is a dummy variable measuring whether $i$'s years of schooling at age $a$ make her eligible for primary school and $x^p_{ia}$ measures distance to school, proxying for cost of attendance.
$s_{ia}$ is a dummy variable equal to one if $i$'s years of schooling make her eligible for secondary school and $x^s_{ia}$ is a constant.
$e_{ia}$ is $i$'s wage offer.
$\varepsilon_{ia}$ is an IID idiosyncratic shock to the utility of $i$'s attending school at age $a$, which follows the logistic distribution.\footnote{Note that this $\varepsilon_i$ has no relation to the $\varepsilon_i$ from equation \eqref{eq:sps_exogeneity}.}

We allow some of the coefficients to vary by $i$'s sex.
In particular, $\beta^s$ and the component of $\psi$ multiplying age vary by gender.
The unobserved type probability also depends on the sex of the child.

\paragraph{Wage offer}
Log wage offers are computed according to the same sparse linear regression representation equation described in assumption SPS3 of the SPS model.
The only difference is that in the SPS model predicted values from the LASSO regression are treated as estimates of the true wage offer.
In the DPS model, agents are assumed to use the same predicted values as the econometrician.

\paragraph{Terminal value}
The value of having accumulated $ed_{i18}$ years of school in the terminal period (when the child is 18) is given below.
\[
	V \left( ed_{ i , 18 } \right) = \frac { \alpha _ { 1 } } { 1 + \exp \left( - \alpha _ { 2 }  ed_ { i , 18 } \right) }
	+ \alpha_3 \cdot 1\{ed_{i,18} \geq sec \}.
\]
$\alpha_3$ measures the additional value of having completed secondary school, measured by $ed_{i,18}$ being greater than the last year of secondary school, $sec$.

\paragraph{Value functions}
The value of choosing to have $i$ attend school after having completed $ed_{ia}$ years of education by age $a$ is:
\begin{align}
	V _ { i a } ^ { S } \left( e d _ { i a } \right) = u _ { i a } ^ { S } + \beta \{ p_a^S & \left( e d _ { i a } + 1 \right) \E \max \left[ V _ { i, a + 1 } ^ S \left( e d _ { ia } + 1 \right) , V _ { i, a + 1 } ^ { \mathrm { W } } \left( e d _ { i a } + 1 \right) \right] \notag \\
	& + \left( 1 - p_a ^S  \left( e d _ { i t } + 1 \right) \right) \E \max \left[ V _ { i, a + 1 } ^ { S } \left( e d _ { i a } \right) , V _ { i, a + 1 } ^W \left( e d _ { i t } \right) \right] \}. \label{eq:enrolling_value}
\end{align}
We implicitly condition on covariates $W_{ia}$ redefined as $W_{ia} \setminus ed_{ia}$ and $\mu_i$.
$p^s_a( ed )$ is the probability of successfully passing grade $ed$ at age $a$ conditional on enrolling.
We estimate it non-parametrically, outside the model (like the wage offer function).
If $i$ successfully passes the grade, she expects to receive the maximum of the value of enrolling or choosing to work in the next year with her education equal to $ed_{ia} + 1$.
The expectation is taken over possible realizations of the $\varepsilon_{ i, a + 1}$ shock.
If $i$ does not pass, she expects to get the maximum of the value of enrolling/working being one year older and with education still equal to $ed_{ia}$.
The term in braces in equation \ref{eq:enrolling_value} is thus the next-period expected value.
From the point of view of this period, the expected value is discounted by $\beta$ which we set equal to 0.95, following AMS.
We add the flow utility of being enrolled to complete the definition of the value function when enrolling.
The value of having $i$ work this period is simpler since $ed_{ia}$ stays fixed:
\[
	V _ { i t } ^W \left( ed_{ia} \right) = u _ { i t }^W + \beta \E \max \left\{ V _ { i, a + 1 }^S \left( ed_{ ia } \right) , V_{ i, a + 1 }^W \left(  ed_{ i a} \right) \right\}.
\]
Given a set of parameters, we solve for the outputs each value function (enroll, work) for of all possible combinations of age and years of education completed by backward induction, beginning by calculating the terminal value for each set of candidate parameters.\footnote{Note that since the only error term $\varepsilon_{it}$ follows an IID logistic distribution, then $\E \max \left\{ V _ { i t + 1 } ^ { s } \left( e d _ { i t } \right) , V _ { i t + 1 } ^ { \mathrm { w } } \left( e d _ { i t } \right) \right\}$ has a closed form (see \citet{Keane2011}).
	Our closed form is slightly different from theirs because they use two Type 1 Extreme Value random variables instead of one logistic draw.
	We simply subtract $\varepsilon_2$ in both equations in their function to derive our closed form:
	\[ 
		\E \max \left\{ V _ { i a }^S \left( e d _ { i a } \right) , V _ { i a } ^W \left( e d _ { i a } \right) \right\} = 
		\rho \{ \log[(\exp( V^S_{ia}) / \rho + \exp(V^W_{it}) / \rho] \}.
	\]
We normalize the scale $\rho$ of the error term to 1 in estimation.}

\paragraph{Likelihood}
With the value function in hand and the logistic error distribution assumption, it is straightforward to compute the likelihood of each child's being enrolled given observed characteristics $W_i$ and unobserved type $\mu_i$: $\Pr(Y_i = 1 | W_i, \mu_i = \mu_k)$.\footnote{We drop the $a$ subscript because the model is estimated on a single cross-section so age does not vary with $i$.}
The full likelihood associated with $Y_i = 1$ is given by:\footnote{If $Y_i = 0$ the likelihood contribution is 1 - \eqref{eq:enrolled_likelihood}.}
\begin{align}
	\label{eq:enrolled_likelihood}
	\Pr(Y_i = 1| W_i ) = \sum_{k=1}^K
	\frac{1}{1+\exp(V^W_{it}(ed_i | \mu_k, W_i )-V^S_{it}( ed_i| \mu_k, W_i))}
	\Pr(\mu_i = \mu_k | W_i, \mu_k).
\end{align}

Since the utility shock $\varepsilon_i$ is IID across time and individuals, we could in principle derive the conditional type probability as a function of the history of characteristics and education decisions of $i$ starting at the age when $i$'s enrollment was first considered ($min\_a$): $(W_{i,a-1}, \ldots, W_{i,min\_a})$ and $(Y_{i,a-1}, \ldots, Y_{i,min\_a})$ respectively:
\begin{align}
	&\Pr(\mu_i = \mu_k | (Y_{i,a-1}, \ldots, Y_{i,min\_a}), (W_{i,a-1}, \ldots, W_{i,min\_a})) \notag \\
	&= \frac{\Pr(Y_{ia} | W_{ia}, \mu_i = \mu_k) \cdots \Pr(Y_{i,min\_a} | W_{i,\min\_a}, \mu_i = \mu_k) \Pr (\mu_k)}
{\sum_{k=1}^{K} \Pr(Y_{ia} | W_{ia}, \mu_i = \mu_k) \cdots \Pr(Y_{i,min\_a} | W_{i,\min\_a}, \mu_i = \mu_k) \Pr (\mu_k)}. \label{eq:type_probability}
\end{align}
Actually estimating \eqref{eq:type_probability} is infeasible, however, since it requires knowledge of the full histories $(W_{i,a-1}, \ldots, W_{i,min\_a})$ and $(Y_{i,a-1}, \ldots, Y_{i,min\_a})$\footnote{These would be included in $W_i$, but unfortunately we do not have them.} and, furthermore, would be very high-dimensional.

Following \cite{Todd2006}, we instead use a multinomial logit approximation:
\begin{align*}
&\Pr(\mu_i = \mu_k | (Y_{i,a-1}, \ldots, Y_{i,min\_a}), (W_{i,a-1}, \ldots, W_{i,min\_a})) \\
&\approx \Pr(\mu_i = \mu_k | a, ed_{ia}, gender, father\_ed) \\
& \ \ \ \approx \frac{\exp(\beta^\mu_k \cdot (1, a, ed_{ia}, gender, father\_ed))}
{1 + \sum_{k} \exp(\beta^\mu_k \cdot (1, a, ed_{ia}, gender, father\_ed)}, k \in {1 \ldots, K-1}
\end{align*}
Age proxies for the length of the history, $ed_{ia}$ for $(Y_{i,a-1}, \ldots, Y_{i,min\_a})$ and $gender$ and $father\_ed$ for $(W_{i,a-1}, \ldots, W_{i,min\_a})$ (specifically $W_{i,min\_a}$).
Importantly, $ed_{ia}$ is excluded from flow utility so there is independent variation to identify the conditional type probabilities.
We estimate the parameters described above, along with the support points $(\mu_1, \cdots, \mu_K)$ by maximum likelihood.
 
\subsubsection{Results}
Figures \ref{fig:dps_fit_by_age} and \ref{fig:dps_fit_by_ed} show that the model fits well in the portion of the Moroccan control group made available to predictors.
Figure \ref{fig:dps_fit_by_age} shows the fit to enrollment rates by age.
The size of each point on the graph representing the sample size in that age-sex stratum.
The model captures delayed entry into school, near-universal enrollment at young ages, and the sharp drop in enrollment for teenagers.
Figure \ref{fig:dps_fit_by_ed} illustrates the fit by number of years of education completed.
It captures the drop in enrollment at the transition to secondary school (year 7 in Morocco).

Table \ref{tab:dps-agesex-welfare} shows the results from using the DPS model to predict the effect of the TAYSSIR program.
For all but a tiny fraction of the Moroccan holdout sample, predicted enrollment gains - while reasonable - are too small to exceed the cost-effectiveness threshold.
We show this visually in Figure \ref{fig:dps_predicted_te_dist}, plotting the CDF of predicted treatment effects due to observables.
A key point from Table \ref{tab:dps-agesex-welfare} is how close TAYSSIR is to being non-cost-effective for \emph{any} child.
Age-sex based extrapolation from PROGRESA only provides a small, statistically insignificant increase in welfare relative to DPS's no-treatment recommendation.

\section{Conclusion}

We develop a decision-based approach to comparing the relative performance of methods for generating counterfactual predictions that are then used to make policy recommendations.
We consider a social planner who is operating in a target context and is seeking recommendations on what policy to choose from a set of feasible options.
The richness of the space of policy options determines the nature of the recommendations being sought  --- e.g., whether a simple up-or-down recommendation to treat everyone or no one, or a more refined recommendation about who should be treated and who not.
Recommendations could be based on econometric estimates, whether reduced form or structural, or expert opinions.  
Our leading application is one where the planner maximizes a linear welfare objective in assigning treatments on the basis of available covariate information.
In this case, the success of a method for generating recommendations depends on how accurately it can predict conditional treatment effects in the target context.

We define a welfare contrast to use for conducting an {\it ex post} analysis of how well different methods performed with respect to the planner's goals.  
We estimate this welfare contrast by using experimental data that reveals how a treatment affects the outcome distribution in the target population.  
The welfare contrast is straightforward to compute, and it allows us to judge whether one method outperforms another in a manner that is statistically significant.

We provide an empirical illustration that considers a planner seeking a recommendation on how to implement program using conditional cash transfers (CCTs) to boost school enrollment in Morocco.
The data available for generating recommendations include a randomized evaluation of CCTs in Mexico as well as data from Moroccan households under the status quo {\it ex ante}, in which no CCTs have been applied.
We generate recommendations from reduced form methods and structural models.
We then perform an {\it ex post} evaluation of these methods using data from a randomized evaluation of CCTs in Morocco. 
We view this toy example as helping build intuition for how to specify methods to evaluate in a full-featured empirical portion of the paper including the contexts from \cite{Banerjee2015b} which we will pre-specify.

We see this exercise as making three contributions.  
First, as our application attempts to show, it provides a clear framework to assess internal validity versus external validity trade-offs.
In particular, our application allows us to assess how robust and internally valid estimates from external contexts fare relative to within-context estimates that may be biased due to model misspecification  \citep{Pritchett2013}.
Second, it provides a principled basis for assessing the performance of different methods by tying the assessment to welfare considerations.  
This is important, because different objective functions can imply different rank orderings of methods. 
Our approach thus forces one to first consider the welfare objective so as to be clear about the relevant objective.
Third, we show that each experiment or observational study may contain much more decision-relevant information than would be contained in a single treatment effect estimate.  

We are undertaking a number of extensions to what we have done here.
Model selection or model averaging approaches based on our welfare criteria may lead to better predictions.
We also plan to work with evidence bases that include more external contexts.
In doing so, we would want to account for site selection, as per \citet{Allcott2015} and \citet{Gechter2018}.     
 
\clearpage
\appendix

\section{Proofs}

\begin{proof}[Proof of Proposition \ref{prop:Delta}] By the weak law of large numbers, Slutsky's theorem, and conditions C2 and C3, $\hat \Delta^g_{lm}$ has the same limit as
\begin{align*}
\tilde \Delta^g_{lm} = & \frac{1}{N}\sum_{i=1}^N \frac{T_i}{p(W_i)}\left(\pi_l(W_i) - \pi_m(W_i) \right)g(Y^P_i(1))\\ & - \frac{1}{N}\sum_{i=1}^N \frac{1-T_i}{1-p(W_i)}\left(\pi_l(W_i) - \pi_m(W_i) \right)g(Y^P_i(0)).
\end{align*}
Take the first term on the right-hand side. 
By the weak law of large numbers, iterated expectations, and condition C1,
\begin{align*}
\frac{1}{N}\sum_{i=1}^N \frac{T_i}{p(W_i)}\left(\pi_l(W_i) - \pi_m(W_i) \right)g(Y^P_i(1)) & 
\\ & \hspace{-5em} \overset{p}{\rightarrow} \E\left[\E[T_i|W]\frac{1}{p(W)}\E[\left(\pi_l(W_i) - \pi_m(W_i) \right)g(Y^P_i(1))|W] \right]\\
& \hspace{-5em} = \E\left[\left(\pi_l(W_i) - \pi_m(W_i) \right)g(Y^P_i(1)) \right],
\end{align*}
and similar for the second term.  
Thus as $N \rightarrow \infty$, $\E[\hat \Delta^g_{lm} - \Delta^g_{lm}] \overset{p}{\rightarrow} 0$.  
Having established that $\hat \Delta^g_{lm}$ is asymptotically unbiased for $\Delta^g_{lm}$, inference follows from the usual generalized method of moments results \citep{Newey1994, Lunceford2004}.
To see this, first note that $\hat \Delta^g_{lm} = \hat \delta_1 - \hat \delta_0$ for $(\hat\delta_1, \hat\delta_0)$ that solve the score equations
\begin{align*}
\sum_{i=1}^N \psi_1(\hat\delta_1) = 0 & \text{ and } \sum_{i=1}^N \psi_0(\hat\delta_0) = 0, \\
& \text{where } \psi_1(\hat\delta_1) = \frac{T_i\left[(\pi_l(W_i) - \pi_m(W_i))g(Y_i) - \hat\delta_1\right]}{p(W_i)}, \\
& \text{ and } \psi_0(\hat\delta_0) = \frac{(1-T_i)\left[(\pi_l(W_i) - \pi_m(W_i))g(Y_i) - \hat\delta_0\right]}{1-p(W_i)}.
\end{align*}
Then, given random sampling, bounded first and second moments, and conditions C1-C2, we have
\begin{equation}
\frac{\sqrt{N}\left(\hat \Delta^g_{lm} - \Delta^g_{lm}\right)}{\sqrt{V_{\Delta^g_{lm}}}} \overset{d}{\rightarrow} N(0,1), 
\end{equation}
where 
\begin{align*}
V_{\Delta^g_{lm}} & = \E[\psi_1(\delta_1)^2 + \psi_0(\delta_0)^2]\\
& = \E\left\{\frac{\left[(\pi_l(W_i) - \pi_m(W_i))g(Y_i^P(1)) - \delta_1\right]^2}{p(W_i)} + \frac{\left[(\pi_l(W_i) - \pi_m(W_i))g(Y_i^P(0)) - \delta_0\right]^2}{1-p(W_i)} \right\},
\end{align*}
with $\delta_t = \E[(\pi_l(W_i) - \pi_m(W_i))g(Y_i^P(t))]$.  Then, by the same conditions for which $\hat \Delta^g_{lm}$ is consistent for $\Delta^g_{lm}$, $\hat V_{\Delta^g_{lm}}$ is consistent for $V_{\Delta^g_{lm}}$.
\end{proof}

\clearpage

\singlespacing
\bibliographystyle{chicago}
\bibliography{../../post-feb-2018}

\newpage

\begin{figure}
	\caption{SPS model: original density for boys}
	\label{fig:sps_original_density}
	\includegraphics[width=\hsize]{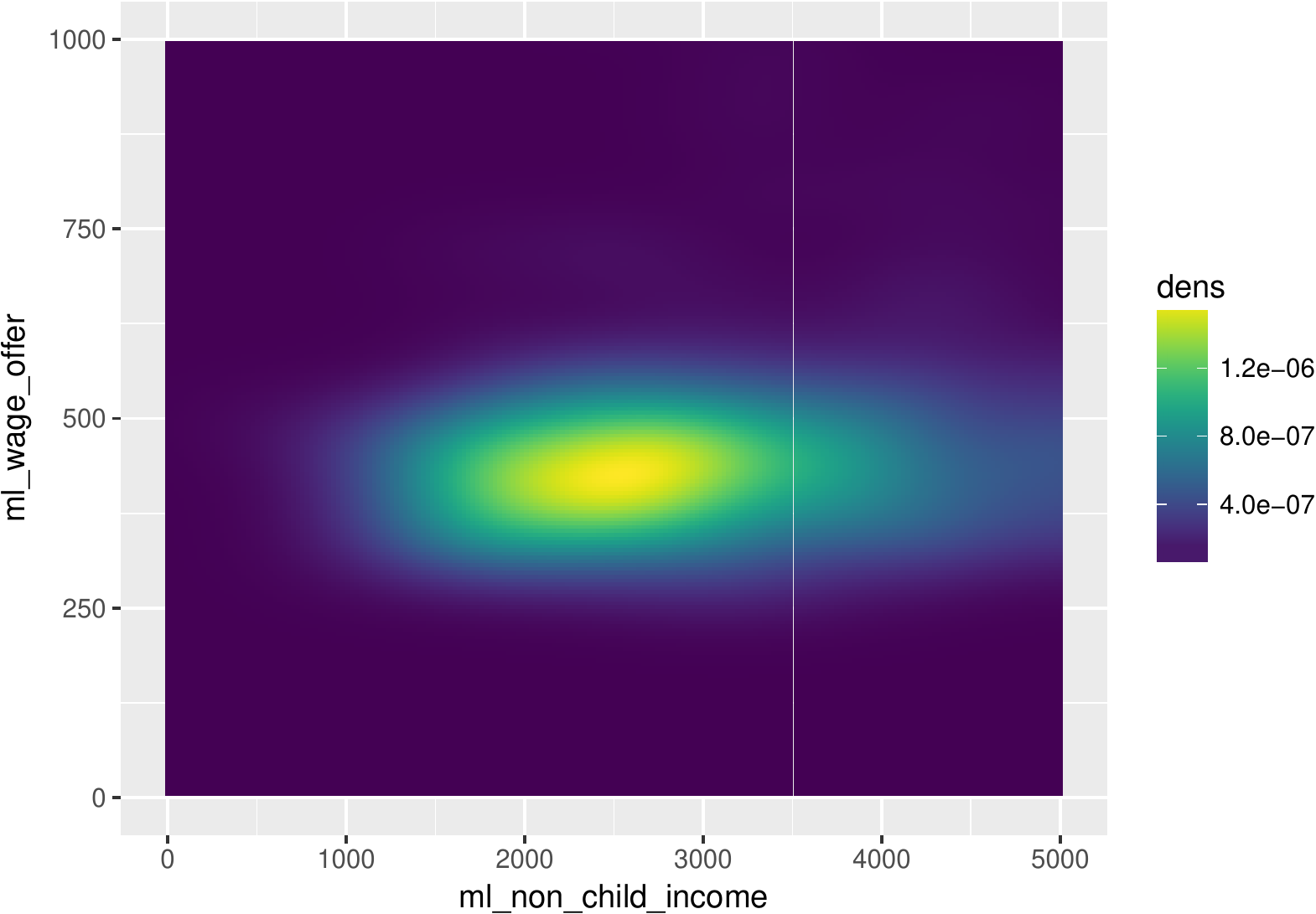}
\end{figure}

\begin{figure}
	\caption{SPS model: counterfactual density for boys}
	\label{fig:sps_counterfactual_density}
	\includegraphics[width=\hsize]{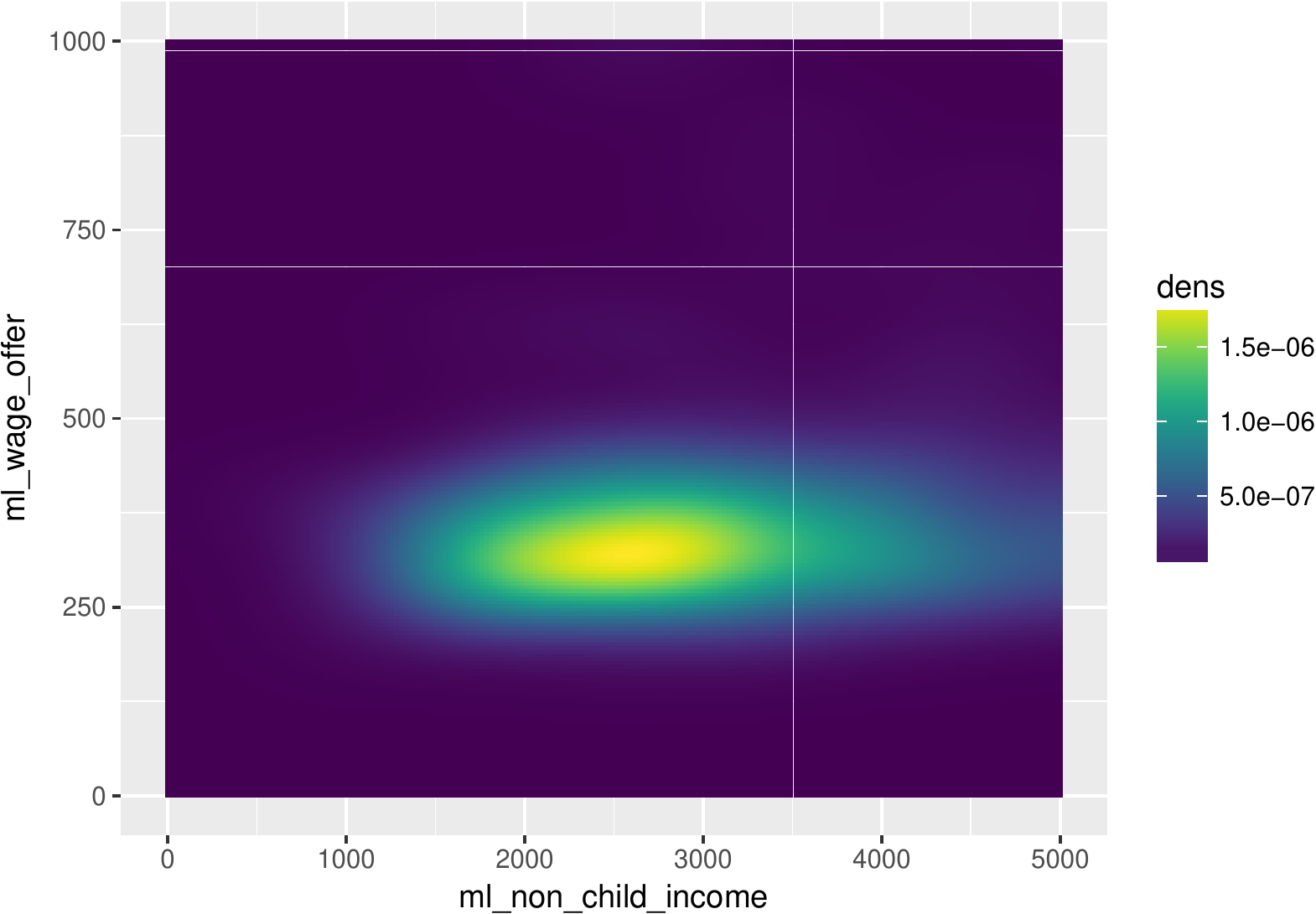}
\end{figure}

\begin{figure}
	\caption{SPS: regression function}
	\label{fig:sps_bivariate_regression}
	\includegraphics[width=\hsize]{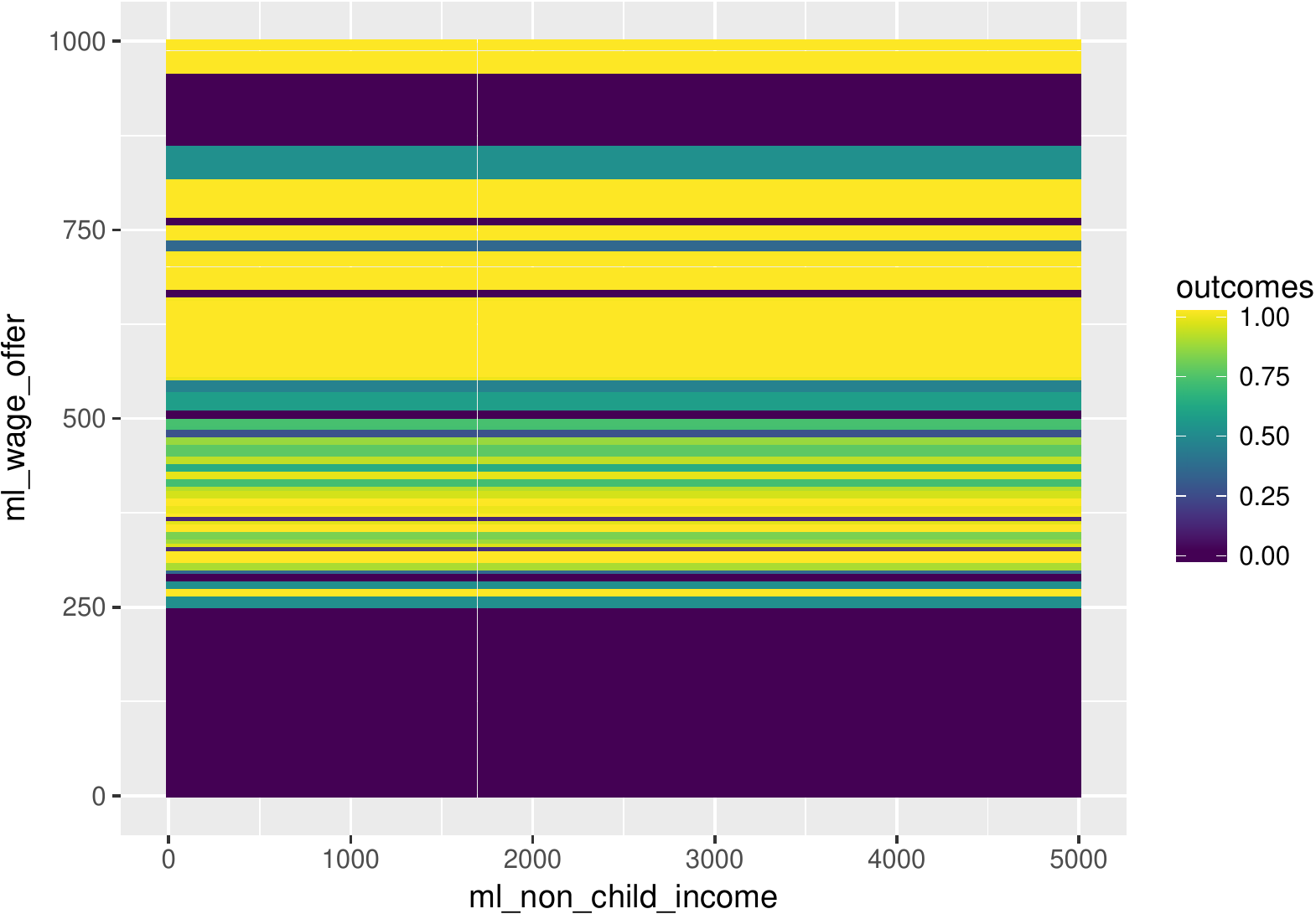}
\end{figure}

\begin{figure}
	\caption{DPS: in-sample fit by age}
	\label{fig:dps_fit_by_age}
	\includegraphics[width=\hsize]{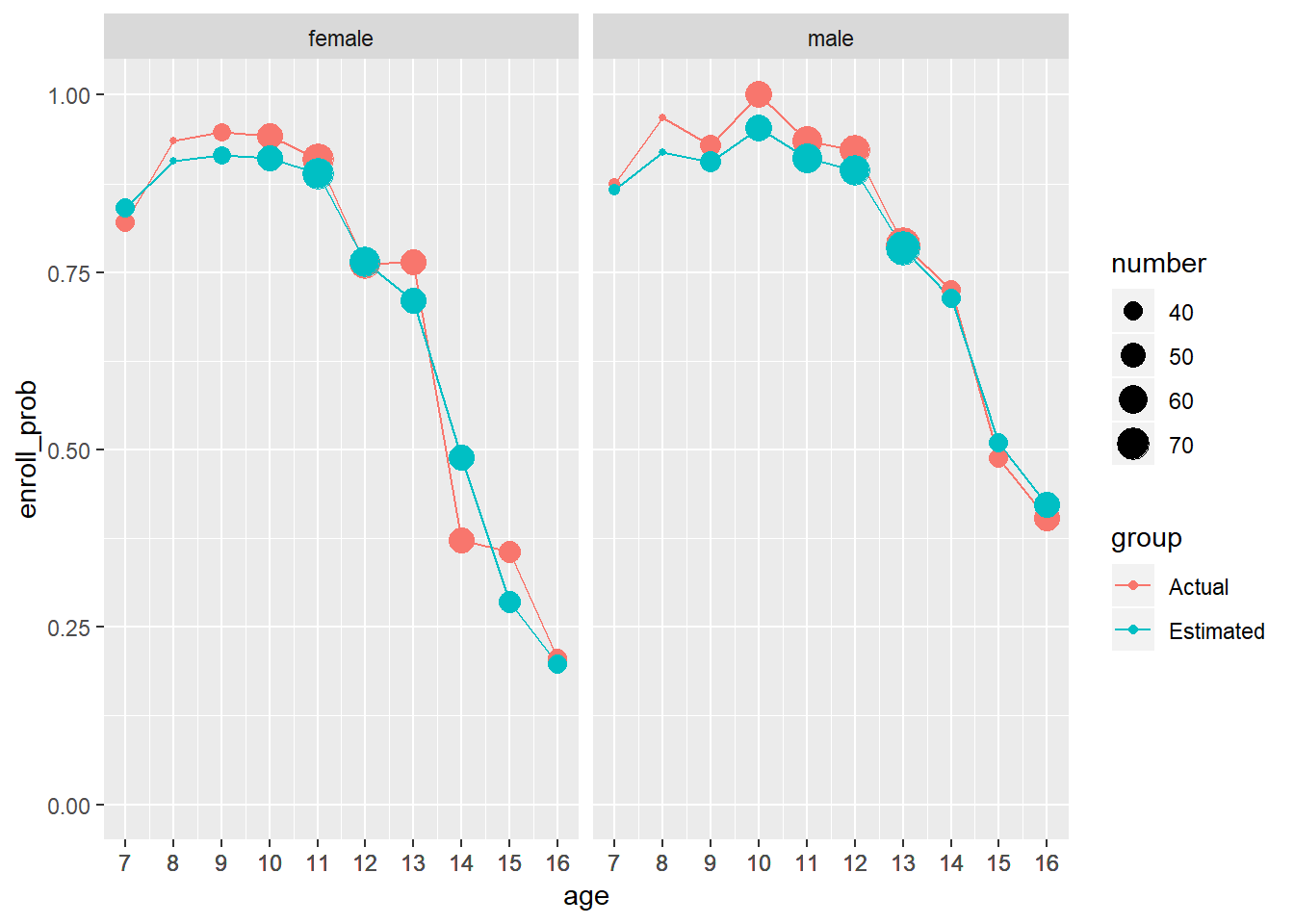}
\end{figure}

\begin{figure}
	\caption{DPS: in-sample fit by years of education completed}
	\label{fig:dps_fit_by_ed}
	\includegraphics[width=\hsize]{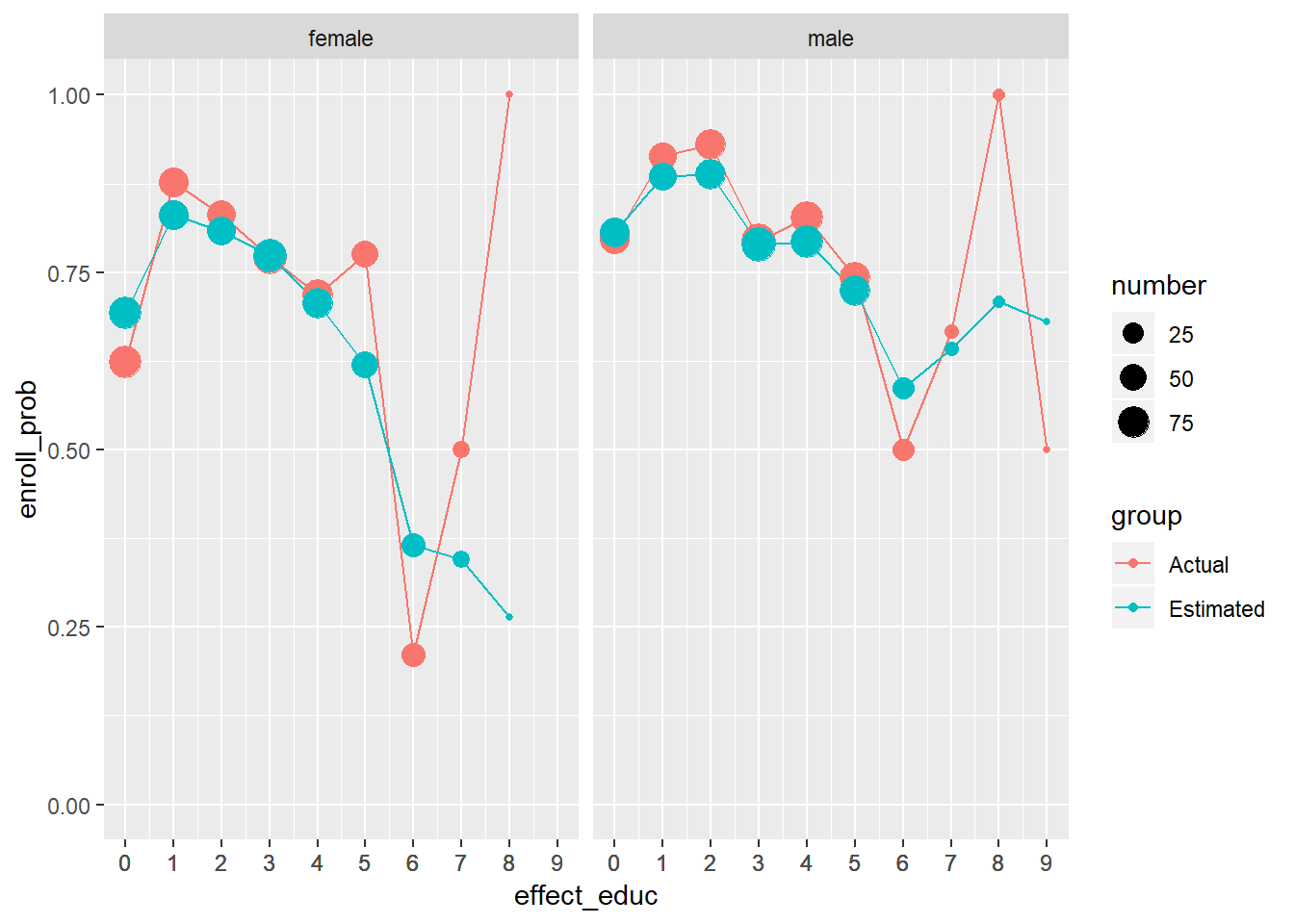}
\end{figure}

\begin{figure}
	\caption{DPS predicted treatment effect, no cost-effectiveness adjustment}
	\label{fig:dps_predicted_te_dist}
	\includegraphics[width=\hsize]{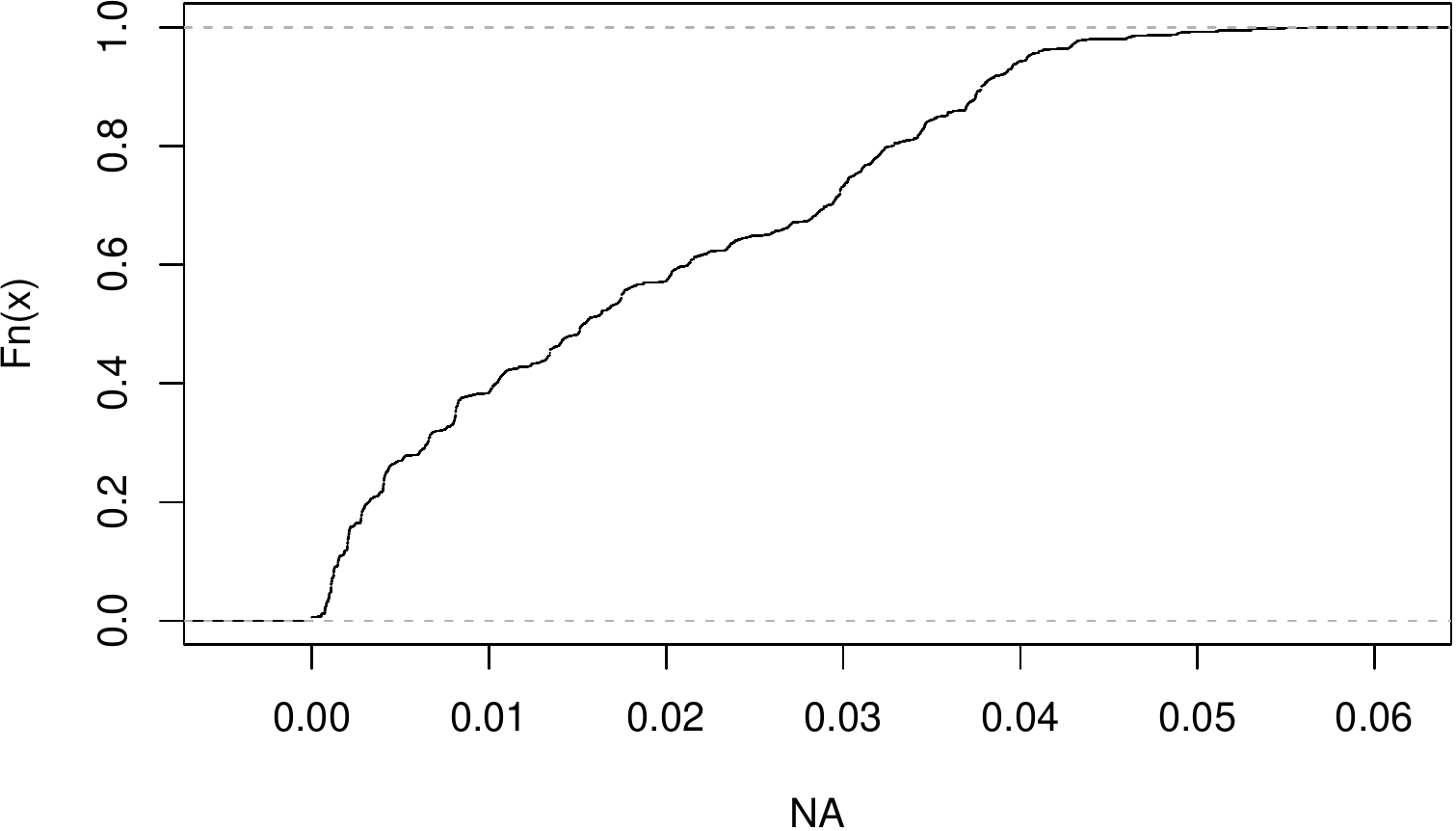}
\end{figure}


\begin{center}
	\begin{table}[ht]
\centering
\begin{tabular}{rrrrrrr}
  \hline
Age & Male & Treated & Control & CATE & PEN\_ub & PNE\_ub \\ 
  \hline
6.00 & 1.00 & 0.93 & 0.91 & 0.02 & 0.09 & 0.07 \\ 
  7.00 & 1.00 & 0.96 & 0.96 & -0.00 & 0.04 & 0.04 \\ 
  8.00 & 1.00 & 0.97 & 0.96 & 0.01 & 0.04 & 0.03 \\ 
  9.00 & 1.00 & 0.98 & 0.96 & 0.02 & 0.04 & 0.02 \\ 
  10.00 & 1.00 & 0.97 & 0.96 & 0.01 & 0.04 & 0.03 \\ 
  11.00 & 1.00 & 0.96 & 0.93 & 0.03 & 0.07 & 0.04 \\ 
  12.00 & 1.00 & 0.91 & 0.86 & 0.05 & 0.14 & 0.09 \\ 
  13.00 & 1.00 & 0.83 & 0.76 & 0.07 & 0.24 & 0.17 \\ 
  14.00 & 1.00 & 0.74 & 0.60 & 0.14 & 0.40 & 0.26 \\ 
  15.00 & 1.00 & 0.53 & 0.46 & 0.07 & 0.53 & 0.46 \\ 
  16.00 & 1.00 & 0.35 & 0.32 & 0.03 & 0.35 & 0.32 \\ 
  6.00 & 0.00 & 0.90 & 0.89 & 0.01 & 0.11 & 0.10 \\ 
  7.00 & 0.00 & 0.95 & 0.95 & 0.00 & 0.05 & 0.05 \\ 
  8.00 & 0.00 & 0.96 & 0.96 & 0.00 & 0.04 & 0.04 \\ 
  9.00 & 0.00 & 0.97 & 0.95 & 0.01 & 0.05 & 0.03 \\ 
  10.00 & 0.00 & 0.97 & 0.96 & 0.00 & 0.04 & 0.03 \\ 
  11.00 & 0.00 & 0.94 & 0.92 & 0.03 & 0.08 & 0.06 \\ 
  12.00 & 0.00 & 0.88 & 0.79 & 0.09 & 0.21 & 0.12 \\ 
  13.00 & 0.00 & 0.74 & 0.68 & 0.05 & 0.32 & 0.26 \\ 
  14.00 & 0.00 & 0.65 & 0.52 & 0.13 & 0.48 & 0.35 \\ 
  15.00 & 0.00 & 0.41 & 0.35 & 0.06 & 0.41 & 0.35 \\ 
  16.00 & 0.00 & 0.33 & 0.25 & 0.07 & 0.33 & 0.25 \\ 
   \hline
\end{tabular}
\caption{Age-sex Outcomes for Mexico} 
\label{tab:age-sex-cates-mex}
\end{table}

	\begin{table}[ht]
\centering
\begin{tabular}{rrrrr}
  \hline
Age & Boys treated & Boys control & Girls treated & Girls control \\ 
  \hline
6.00 & 0.84 & 0.91 & 0.82 & 0.89 \\ 
  7.00 & 0.87 & 0.96 & 0.87 & 0.95 \\ 
  8.00 & 0.85 & 0.96 & 0.85 & 0.96 \\ 
  9.00 & 0.86 & 0.96 & 0.85 & 0.95 \\ 
  10.00 & 0.83 & 0.96 & 0.82 & 0.96 \\ 
  11.00 & 0.82 & 0.93 & 0.80 & 0.92 \\ 
  12.00 & 0.77 & 0.86 & 0.75 & 0.79 \\ 
  13.00 & 0.70 & 0.76 & 0.63 & 0.68 \\ 
  14.00 & 0.63 & 0.60 & 0.56 & 0.52 \\ 
  15.00 & 0.45 & 0.46 & 0.35 & 0.35 \\ 
  16.00 & 0.30 & 0.32 & 0.28 & 0.25 \\ 
   \hline
\end{tabular}
\caption{Adjusted Age-sex Outcomes for Mexico} 
\label{tab:age-sex-adj-cates-mex}
\end{table}

\begin{table}[ht]
\centering
\begin{tabular}{rrr}
  \hline
 & w/ CE & w/o CE \\ 
  \hline
Share treated (age-sex) & 0.126 & 0.966 \\ 
  Share treated (all) & 1.000 & 1.000 \\ 
  Enrollment difference & -0.035 & -0.005 \\ 
  SE enroll. diff. & 0.011 & 0.004 \\ 
  Welfare difference & 0.065 & -0.002 \\ 
  SE welfare diff. & 0.010 & 0.004 \\ 
   \hline
\end{tabular}
\caption{Welfare comparison for age-sex extrapolation vs. treat all} 
\label{tab:agesex-all-welfare}
\end{table}
 
\end{center}


\begin{center}
\begin{table}[ht]
\centering
\begin{tabular}{lrr}
  \hline
Variable & Importance Y1 & Importance Y0 \\ 
  \hline
ml\_base\_enrolled & 0.21 & 0.19 \\ 
  ml\_age & 0.33 & 0.32 \\ 
  ml\_n\_child & 0.02 & 0.01 \\ 
  ml\_male & 0.01 & 0.01 \\ 
  ml\_hh\_head\_male & 0.00 & 0.00 \\ 
  ml\_single\_parent & 0.00 & 0.00 \\ 
  ml\_al\_father & 0.00 & 0.01 \\ 
  ml\_al\_mother & 0.00 & 0.00 \\ 
  ml\_lw\_father & 0.00 & 0.00 \\ 
  ml\_lw\_mother & 0.00 & 0.00 \\ 
  ml\_literacy & 0.03 & 0.04 \\ 
  ml\_n\_total & 0.01 & 0.01 \\ 
  ml\_yrs\_educ & 0.16 & 0.18 \\ 
  ml\_hh\_head\_edu & 0.07 & 0.07 \\ 
  ml\_father\_educ & 0.05 & 0.04 \\ 
  ml\_mother\_educ & 0.07 & 0.08 \\ 
  ml\_base\_enrolled\_mi & 0.01 & 0.01 \\ 
  ml\_male\_mi & 0.00 & 0.00 \\ 
  ml\_al\_father\_mi & 0.00 & 0.00 \\ 
  ml\_al\_mother\_mi & 0.00 & 0.00 \\ 
  ml\_literacy\_mi & 0.01 & 0.01 \\ 
  ml\_yrs\_educ\_mi & 0.01 & 0.01 \\ 
  ml\_father\_educ\_mi & 0.01 & 0.00 \\ 
  ml\_mother\_educ\_mi & 0.00 & 0.00 \\ 
  ml\_hh\_monthly\_consump & 0.00 & 0.00 \\ 
  ml\_n\_child\_mi & 0.00 & 0.00 \\ 
  ml\_lw\_father\_mi & 0.00 & 0.00 \\ 
  ml\_hh\_monthly\_consump\_mi & 0.00 & 0.00 \\ 
   \hline
\end{tabular}
\caption{GRF variable importance: no cost-effectiveness adjustment} 
\label{tab:mexico-grfpo-noce-varimp}
\end{table}
 
\begin{table}[ht]
\centering
\begin{tabular}{lrr}
  \hline
Variable & Importance Y1 & Importance Y0 \\ 
  \hline
ml\_base\_enrolled & 0.19 & 0.19 \\ 
  ml\_age & 0.28 & 0.27 \\ 
  ml\_n\_child & 0.02 & 0.01 \\ 
  ml\_male & 0.01 & 0.01 \\ 
  ml\_hh\_head\_male & 0.01 & 0.00 \\ 
  ml\_single\_parent & 0.00 & 0.00 \\ 
  ml\_al\_father & 0.01 & 0.01 \\ 
  ml\_al\_mother & 0.03 & 0.03 \\ 
  ml\_lw\_father & 0.00 & 0.00 \\ 
  ml\_lw\_mother & 0.01 & 0.01 \\ 
  ml\_literacy & 0.03 & 0.04 \\ 
  ml\_n\_total & 0.01 & 0.01 \\ 
  ml\_yrs\_educ & 0.16 & 0.15 \\ 
  ml\_hh\_head\_edu & 0.08 & 0.08 \\ 
  ml\_father\_educ & 0.05 & 0.04 \\ 
  ml\_mother\_educ & 0.06 & 0.08 \\ 
  ml\_base\_enrolled\_mi & 0.01 & 0.01 \\ 
  ml\_male\_mi & 0.01 & 0.00 \\ 
  ml\_al\_father\_mi & 0.00 & 0.00 \\ 
  ml\_al\_mother\_mi & 0.00 & 0.00 \\ 
  ml\_literacy\_mi & 0.01 & 0.02 \\ 
  ml\_yrs\_educ\_mi & 0.01 & 0.02 \\ 
  ml\_father\_educ\_mi & 0.00 & 0.00 \\ 
  ml\_mother\_educ\_mi & 0.01 & 0.00 \\ 
   \hline
\end{tabular}
\caption{GRF variable importance: Y1 adjusted for cost-effectiveness} 
\label{tab:mexico-grfpo-ce-varimp}
\end{table}

\begin{table}[ht]
\centering
\begin{tabular}{rrrrrr}
  \hline
Age & Male & Avg. TE & Min. TE & Max TE & SD TE \\ 
  \hline
6.00 & 1.00 & -0.06 & -0.20 & 0.08 & 0.06 \\ 
  7.00 & 1.00 & -0.05 & -0.18 & 0.12 & 0.06 \\ 
  8.00 & 1.00 & 0.03 & -0.07 & 0.12 & 0.04 \\ 
  9.00 & 1.00 & 0.06 & -0.04 & 0.16 & 0.04 \\ 
  10.00 & 1.00 & 0.06 & -0.10 & 0.16 & 0.03 \\ 
  11.00 & 1.00 & 0.04 & -0.07 & 0.15 & 0.03 \\ 
  12.00 & 1.00 & 0.05 & -0.14 & 0.18 & 0.05 \\ 
  13.00 & 1.00 & 0.07 & -0.19 & 0.19 & 0.09 \\ 
  14.00 & 1.00 & 0.15 & -0.11 & 0.41 & 0.08 \\ 
  15.00 & 1.00 & 0.20 & -0.16 & 0.45 & 0.11 \\ 
  16.00 & 1.00 & 0.09 & -0.26 & 0.32 & 0.14 \\ 
   \hline
\end{tabular}
\caption{GRF CATE predictions for Morocco: Boys, no cost-effectiveness adjustment} 
\label{tab:morocco-grfpo-noce-catesum-boys}
\end{table}

\begin{table}[ht]
\centering
\begin{tabular}{rrrrrr}
  \hline
Age & Male & Avg. TE & Min. TE & Max TE & SD TE \\ 
  \hline
6.00 & 1.00 & -0.04 & -0.21 & 0.14 & 0.05 \\ 
  7.00 & 1.00 & -0.02 & -0.14 & 0.12 & 0.05 \\ 
  8.00 & 1.00 & 0.01 & -0.09 & 0.17 & 0.05 \\ 
  9.00 & 1.00 & 0.03 & -0.12 & 0.17 & 0.06 \\ 
  10.00 & 1.00 & 0.03 & -0.09 & 0.14 & 0.04 \\ 
  11.00 & 1.00 & 0.02 & -0.08 & 0.15 & 0.03 \\ 
  12.00 & 1.00 & 0.04 & -0.14 & 0.16 & 0.04 \\ 
  13.00 & 1.00 & 0.05 & -0.18 & 0.19 & 0.09 \\ 
  14.00 & 1.00 & 0.14 & -0.08 & 0.41 & 0.09 \\ 
  15.00 & 1.00 & 0.14 & -0.12 & 0.39 & 0.13 \\ 
  16.00 & 1.00 & 0.06 & -0.18 & 0.38 & 0.11 \\ 
   \hline
\end{tabular}
\caption{GRF CATE predictions for Morocco: Girls, no cost-effectiveness adjustment} 
\label{tab:morocco-grfpo-noce-catesum-girls}
\end{table}

\begin{table}[ht]
\centering
\begin{tabular}{rrrrrr}
  \hline
Age & Male & Avg. Adj. TE & Min. Adj. TE & Max Adj. TE & SD Adj. TE \\ 
  \hline
6.00 & 1.00 & -0.10 & -0.22 & 0.05 & 0.06 \\ 
  7.00 & 1.00 & -0.08 & -0.18 & 0.06 & 0.05 \\ 
  8.00 & 1.00 & -0.03 & -0.11 & 0.09 & 0.03 \\ 
  9.00 & 1.00 & -0.00 & -0.07 & 0.12 & 0.04 \\ 
  10.00 & 1.00 & -0.02 & -0.13 & 0.09 & 0.03 \\ 
  11.00 & 1.00 & -0.03 & -0.12 & 0.08 & 0.03 \\ 
  12.00 & 1.00 & -0.01 & -0.16 & 0.11 & 0.04 \\ 
  13.00 & 1.00 & 0.01 & -0.24 & 0.11 & 0.07 \\ 
  14.00 & 1.00 & 0.07 & -0.11 & 0.26 & 0.06 \\ 
  15.00 & 1.00 & 0.14 & -0.17 & 0.33 & 0.10 \\ 
  16.00 & 1.00 & 0.05 & -0.23 & 0.23 & 0.11 \\ 
   \hline
\end{tabular}
\caption{GRF CATE predictions for Morocco: Boys, adjusted for cost-effectiveness} 
\label{tab:morocco-grfpo-ce-catesum-boys}
\end{table}

\begin{table}[ht]
\centering
\begin{tabular}{rrrrrr}
  \hline
Age & Male & Avg. Adj. TE & Min. Adj. TE & Max Adj. TE & SD Adj. TE \\ 
  \hline
6.00 & 1.00 & -0.08 & -0.24 & 0.06 & 0.04 \\ 
  7.00 & 1.00 & -0.06 & -0.16 & 0.05 & 0.04 \\ 
  8.00 & 1.00 & -0.05 & -0.17 & 0.09 & 0.05 \\ 
  9.00 & 1.00 & -0.04 & -0.14 & 0.11 & 0.05 \\ 
  10.00 & 1.00 & -0.05 & -0.14 & 0.04 & 0.03 \\ 
  11.00 & 1.00 & -0.05 & -0.13 & 0.12 & 0.02 \\ 
  12.00 & 1.00 & -0.03 & -0.16 & 0.05 & 0.04 \\ 
  13.00 & 1.00 & -0.01 & -0.22 & 0.10 & 0.07 \\ 
  14.00 & 1.00 & 0.07 & -0.08 & 0.28 & 0.06 \\ 
  15.00 & 1.00 & 0.09 & -0.12 & 0.31 & 0.11 \\ 
  16.00 & 1.00 & 0.02 & -0.21 & 0.27 & 0.09 \\ 
   \hline
\end{tabular}
\caption{GRF CATE predictions for Morocco: Girls, adjusted for cost-effectiveness} 
\label{tab:morocco-grfpo-ce-catesum-girls}
\end{table}

\begin{table}[ht]
\centering
\begin{tabular}{rrrrr}
  \hline
Age & Girls share & Girls treat rate & Boys share & Boys treat rate \\ 
  \hline
6.00 & 0.05 & 0.18 & 0.06 & 0.17 \\ 
  7.00 & 0.07 & 0.28 & 0.07 & 0.23 \\ 
  8.00 & 0.09 & 0.55 & 0.08 & 0.81 \\ 
  9.00 & 0.08 & 0.73 & 0.08 & 0.94 \\ 
  10.00 & 0.10 & 0.86 & 0.12 & 0.94 \\ 
  11.00 & 0.12 & 0.78 & 0.12 & 0.88 \\ 
  12.00 & 0.13 & 0.86 & 0.13 & 0.91 \\ 
  13.00 & 0.12 & 0.71 & 0.12 & 0.83 \\ 
  14.00 & 0.09 & 0.95 & 0.10 & 0.98 \\ 
  15.00 & 0.09 & 0.91 & 0.08 & 0.94 \\ 
  16.00 & 0.06 & 0.80 & 0.05 & 0.79 \\ 
   \hline
\end{tabular}
\caption{GRF treatment rates for Morocco, no cost-effectiveness adjustment} 
\label{tab:morocco-grfpo-noce-treat-rate}
\end{table}

\begin{table}[ht]
\centering
\begin{tabular}{rrrrr}
  \hline
Age & Girls share & Girls treat rate & Boys share & Boys treat rate \\ 
  \hline
6.00 & 0.05 & 0.05 & 0.06 & 0.04 \\ 
  7.00 & 0.07 & 0.09 & 0.07 & 0.04 \\ 
  8.00 & 0.09 & 0.14 & 0.08 & 0.20 \\ 
  9.00 & 0.08 & 0.27 & 0.08 & 0.41 \\ 
  10.00 & 0.10 & 0.09 & 0.12 & 0.27 \\ 
  11.00 & 0.12 & 0.03 & 0.12 & 0.12 \\ 
  12.00 & 0.13 & 0.19 & 0.13 & 0.39 \\ 
  13.00 & 0.12 & 0.51 & 0.12 & 0.62 \\ 
  14.00 & 0.09 & 0.89 & 0.10 & 0.93 \\ 
  15.00 & 0.09 & 0.79 & 0.08 & 0.91 \\ 
  16.00 & 0.06 & 0.58 & 0.05 & 0.73 \\ 
   \hline
\end{tabular}
\caption{GRF treatment rates for Morocco, adjusted for cost-effectiveness} 
\label{tab:morocco-grfpo-ce-treat-rate}
\end{table}

\begin{table}[ht]
\centering
\begin{tabular}{rrr}
  \hline
 & w/ CE & w/o CE \\ 
  \hline
Share treated (GRF) & 0.377 & 0.772 \\ 
  Share treated (age-sex) & 0.126 & 0.966 \\ 
  Enrollment difference & 0.011 & -0.002 \\ 
  SE enroll. diff. & 0.010 & 0.008 \\ 
  Welfare difference & -0.018 & 0.014 \\ 
  SE welfare diff. & 0.009 & 0.008 \\ 
   \hline
\end{tabular}
\caption{Welfare comparison for GRF vs. age-sex extrapolation} 
\label{tab:grf-agesex-welfare}
\end{table}

\begin{table}[ht]
\centering
\begin{tabular}{rr}
  \hline
 & w/ CE \\ 
  \hline
Share treated (SPS) & 0.447 \\ 
  Share treated (age-sex) & 0.126 \\ 
  Enrollment difference & -0.018 \\ 
  SE enroll. diff. & 0.010 \\ 
  Welfare difference & -0.046 \\ 
  SE welfare diff. & 0.009 \\ 
   \hline
\end{tabular}
\caption{Welfare comparison for SPS vs. age-sex extrapolation} 
\label{tab:sps-agesex-welfare}
\end{table}

\begin{table}[ht]
\centering
\begin{tabular}{rr}
  \hline
 & w/ CE \\ 
  \hline
Share treated (DPS) & 0.001 \\ 
  Share treated (age-sex) & 0.151 \\ 
  Enrollment difference & -0.017 \\ 
  SE enroll. diff. & 0.008 \\ 
  Welfare difference & -0.004 \\ 
  SE welfare diff. & 0.008 \\ 
   \hline
\end{tabular}
\caption{Welfare comparison for DPS vs. age-sex extrapolation} 
\label{tab:dps-agesex-welfare}
\end{table}

\end{center}

\end{document}